\newcounter{mytempeqncnt}
\newtheorem{thm}{Theorem}
\newtheorem{lem}{Lemma}
\newtheorem{prp}{Proposition}
\newtheorem{crlry}{Corollary}
\newtheorem{defntn}{Definition}
\newtheorem{illexamp}{Illustrative example}
\newenvironment{proof}[1][Proof]{\begin{trivlist}
\item[\hskip \labelsep {\bfseries #1}]}{\end{trivlist}}
\newcommand{\qed}{\nobreak \ifvmode \relax \else
      \ifdim\lastskip<1.5em \hskip-\lastskip
      \hskip1.5em plus0em minus0.5em \fi \nobreak
      \vrule height0.75em width0.5em depth0.25em\fi}
\newcommand{\defeq}{\vcentcolon=}
\begin{document}
\title{Self-organizing Networks of Information Gathering Cognitive Agents}

\author{Ahmed~M.~Alaa,~\IEEEmembership{Member,~IEEE}, Kartik Ahuja, and Mihaela van der Schaar,~\IEEEmembership{Fellow,~IEEE}
\thanks{The authors are with the Department of Electrical Engineering, University of California Los Angeles (UCLA), Los Angeles, CA, 90095, USA (e-mail: ahmedmalaa@ucla.edu, ahujak@ucla.edu, mihaela@ee.ucla.edu). This work was funded by the Office of Naval Research (ONR).} 
}
\IEEEspecialpapernotice{(Invited Paper)}
\markboth{}%
{Alaa \MakeLowercase{\textit{et al.}}: Self-organizing Networks of Information Gathering Cognitive Agents}
\maketitle
\begin{abstract}
In many scenarios, networks emerge endogenously as cognitive agents
establish links in order to exchange information. Network formation
has been widely studied in economics, but only on the basis of simplistic
models that assume that the value of each additional piece of information
is constant. In this paper we present a first model and associated
analysis for network formation under the much more realistic assumption
that the value of each additional piece of information depends on
the type of that piece of information and on the information already
possessed: information may be complementary or redundant. We model
the formation of a network as a non-cooperative game in which the
actions are the formation of links and the benefit of forming a link
is the value of the information exchanged minus the cost of forming
the link. We characterize the topologies of the networks emerging
at a Nash equilibrium (NE) of this game and compare the efficiency
of equilibrium networks with the efficiency of centrally designed
networks. To quantify the impact of information redundancy and linking
cost on social information loss, we provide estimates for the Price
of Anarchy (PoA); to quantify the impact on individual information
loss we introduce and provide estimates for a measure we call Maximum
Information Loss (MIL). Finally, we consider the setting in which agents
are not endowed with information, but must produce it. We show that
the validity of the well-known \textquotedblleft law of the few\textquotedblright{}
depends on how information aggregates; in particular, the \textquotedblleft law
of the few\textquotedblright{} fails when information displays complementarities. \end{abstract}
\begin{IEEEkeywords}
Cognitive networking, cognitive agents, information networks, network
formation, self-organizing networks. 
\end{IEEEkeywords}
\IEEEpeerreviewmaketitle{ }
\section{Introduction}
\IEEEPARstart{T}{he} widespread usage of mobile devices, together with the emergence of social-based services and applications, have inspired novel and self-organized networking paradigms that capitalize on the ability of mobile devices to connect and share information in an ad-hoc fashion. Contemporary networks, where users produce and exchange information, are ``socio-technological" in nature; users do not necessarily exploit an exogenously designed network infrastructure, but rather form an endogenous network driven by the individual users' quest for information. In this paper, we present a novel network formation model for information exchange over endogenously formed networks. Albeit being abstract, our model provides insights into understanding and designing many emerging and envisioned classes of applications. 
\subsection{Motivation}
Many emerging networks are formed endogenously by self-interested agents, who take information sharing and production actions. Examples of such networks are: dynamic spectrum management by wireless users \cite{ref1}, social networks overlaid on technological networks \cite{ref4} \cite{ref5}, device-to-device (D2D) communications, vehicular networks \cite{ref7}, Internet-of-Things (IoT) \cite{ref72}, and smart sensor networks \cite{ref73}. In many of these networks, users connect to each other in order to exchange and gather information. For instance, secondary users exchange information about spectrum occupancy in cognitive radio networks \cite{ref8} \cite{ref9}, autonomous rescue robots exchange environmental sensory information \cite{ref72} \cite{ref10}, D2D users engage in short range communications in order to exchange data content of Social Networks Services (SNSs) \cite{refnew1}, and self-interested users take capacity allocation decisions for multicast streaming over networks \cite{refnew2}. Users in such networks possess two key features: they are {\it opportunistic}, in the sense that they exploit their opportunistic encounter with other mobile users to establish short-range communication links with them, and they are {\it cognitive}, in the sense that they need to reason about establishing costly communication links with others given the value of information they can get via these links. Information in this context is an abstraction for any class of data that users gather and process, such as multi-modal content, geographical information, event-related information, cached content, behavioral data, and personal sensory information \cite{refnew3}-\cite{refnew7}. For instance, mobile users who coexist in close proximity can share information about traffic congestion and road accidents which helps them update their routes via applications such as Waze and Google maps, and D2D users can gather offloaded traffic of context-aware applications from other users by forming short-range communications links \cite{refnew1}. Moreover, information can also be produced by the agents themselves in the form of user-generated content, such as the upload and creation of blogs, videos and photos on online social networks (OSN), the purchase of content from service providers in peer-to-peer networks, updating traffic information via an application such as Waze, etc. Thus, users in such networks can jointly decide how much information should they produce, and how much information should they opportunistically acquire from other users. As it is whenever users are self-interested, a game-theoretic framework is naturally deployed to study which networks will emerge at equilibrium and what are their characteristics. Network formation has been studied in the economics, electrical engineering and computer science literature. In the following subsection, we briefly review these related works on endogenous network formation.  

\subsection{Related Works}
Strategic network formation was first studied in the economics literature. Some of this literature \cite{ref12}-\cite{ref141} asks which networks are stable (according to some criteria) and hence more likely to persist and be observed. A (smaller) literature asks which networks emerge as the result of some specific dynamic process \cite{ref15} \cite{ref16}. In all these works, simplistic benefit functions are used: the value of each additional ``good" exchanged is constant \cite{ref12}-\cite{mih2}. However, in realistic settings, information possessed by different agents can be redundant or complementary. For instance, secondary users in a multi-band cognitive radio system may be interested in gathering information about spectrum occupancy for bands that they do not sense by communicating with other users who do sense these bands \cite{ref8};  sensors  deployed over a correlated random field \cite{ref17}-\cite{ref187} may be interested in gathering complementary measurements about some set of physical processes of interest; and mobile users who exchange offloaded traffic of SNSs and context-aware applications are only interested in gathering non-redundant traffic and data updates.

\subsection{Summary of contributions}
This paper introduces a new model for strategic network formation where autonomous cognitive agents exchange valuable information. We refer to such networks as \textit{cognitive information networks} (CIN); networks in which agents self-organize to gather/exchange and produce information about a state of the world. This state of the world can be spectrum occupancy information and primary user activity in a multi-band cognitive radio system, location information provided by anchors of wireless networks, a set of messages sent by information sources in a multicast network, or blogs, videos, and data exchanged by users of social-physical networks. Agents are \textit{cognitive} since they \textit{perceive} information possessed by other agents, \textit{reason} about which links to establish, how much information to produce, and then take information production and link formation \textit{decisions} which result in an endogenously-formed network topology. We assume that agents in a CIN possess different amounts of information, benefit only from gathering non-redundant information, and they form links with each other in order to gather information and maximize their \textit{knowledge} of the state of the world.

Since the information possessed by different agents may be correlated (redundant), and link formation is costly, agents should cognitively select \textit{which} agents to link with. We formulate this problem as a \textit{non-cooperative network formation game}. Using information-theoretic measures for the value of the information possessed by each agent, we aim at characterizing the emerging stable network topologies at Nash Equilibrium (NE). Throughout our analysis, we focus on two classes of linking cost scenarios: homogeneous link formation cost and heterogeneous link formation costs. In the former, connecting to any agent entails the same cost, while in the later, the link cost is recipient-dependent. The link cost can correspond to tokens \cite{tok1} \cite{tok2}, or an abstraction for any monetary, energy, or delay costs incurred by the agent forming the link. An agent in the network is an abstraction for a mobile user, a mobile device, or a transmitter/receiver that is rational and self-interested. 

We show that the networks that emerge at equilibrium are minimally connected; thus, agents tend to minimize the overall cost of constructing the network. With homogeneous link costs, equilibrium leads to a network in which each component is a star. Moreover, we show how information redundancy affects the link cost ranges at which the network becomes connected or disconnected, in addition to its impact on the network efficiency by quantifying the Price-of-Anarchy (PoA). For instance, we show that for networks with low link costs, when the link costs are homogeneous, all emerging networks are efficient; in contrast, information redundancy can induce costly anarchy in networks with heterogeneous link costs.

Finally, we consider a setting in which each agent will not only decide which links to form, but also the amount of information to produce and we provide a characterization for the emerging NE. We show that when the number of agents is large, the fraction of agents producing information at equilibrium depends on the amount of redundancy in the agents' information. When the agents produce strongly correlated information, the fraction of information producers is small and tends to zero as the number of agents tends to infinity: most agents get the information they need from a small set of agents. On the other hand, when agents have uncorrelated information, the number of information producers can grow at the same rate of total number of agents. Thus, such networks violate what Galeotti and Goyal \cite{ref19} call the ``\textit{law of the few}". In addition, we quantify the total amount of information produced in an asymptotically large network and identify scenarios in which the amount of information produced at equilibrium grows with the number of agents.

This paper introduces a new model for cognitive agents exchanging information/knowledge and studying what networks emerge endogenously as a result of self-organizing cognitive agents. Since many applications can use the presented model, we do not delve on the idiosyncratic details of specific applications. The rest of the paper is organized as follows. In Section II, we formalize the network formation game among agents in a CIN. Section III characterizes the emerging stable networks when the link formation costs are homogeneous, and the efficiency of such networks are investigated. Section IV analyzes the network topology and equilibrium efficiency for the case of heterogeneous link costs. The joint information production and link formation game is studied in Section V. Suggested future extensions for our model are provided in Section VI. Finally, conclusions are drawn in Section VII.

\section{Basic Model}

In this section, we discuss the problem setting and propose a basic
model to formulate the endogenous network formation game emerging among cognitive
agents.

\subsection{Information model}

Let $\mathcal{N}=\{1,2,3,...,N\}$ be the set of agents in the CIN.
Each agent $i$ possesses exogenous information in the form of a discrete
random variable $X_{i}$ and aims to form links with other agents
to maximize its utility, which is defined as the benefit from the
total information it possesses minus the linking cost. The formation
of links is costly; thus, an agent has to trade off the benefits of
the information it obtains from another agent versus the cost it needs
to pay for connecting with that agent. The amount of information in
$X_{i}$ is quantified by the \textit{entropy} function $H(X_{i})$.
In addition, the random variables of all agents may be correlated,
which indicates that some agents may possess similar information that
is redundant to that of the other agents. The common information between
agent $i$ and $j$ is captured by the \textit{mutual information}
$I(X_{i};X_{j})$.

The information possessed by the set of agents $\mathcal{N}$ is captured
by an \textit{entropic vector} that we define as follows.

\begin{defntn} \textbf{Entropic vector-} a vector $\overrightarrow{{\bf H}}$
is said to be an entropic vector of order $N$ if there exists a random variable tuple $(X_{1},X_{2},...,X_{N})$,
where associated with any subset $\mathcal{V}$ of $\mathcal{N}$,
there is a joint entropy $H(X_{\mathcal{V}})$ that is an element
of $\overrightarrow{{\bf H}}$, where $X_{\mathcal{V}}=\{X_{i}|i\in\mathcal{V}\}$
\cite{ref20}. \, \IEEEQEDhere
\end{defntn}

The elements of $\overrightarrow{{\bf H}}$ represent the joint entropies between all possible subsets of random variables possessed by agents in $\mathcal{N}$. The set of all entropic vector constitute the \textit{entropic region} which we define as follows.

\begin{defntn} \textbf{Entropic region-}
the entropic region $\Gamma_{N}^{*}\subset \mathbb{R}_{+}^{2^{N}-1}$ is the set of all entropic vectors of order $N$, i.e. the set of all possible entropic vectors that can correspond to the information possessed by $N$ agents. Thus, if a vector $\overrightarrow{{\bf H}}$ is entropic, then $\overrightarrow{{\bf H}}\in\Gamma_{N}^{*}$ \cite{ref20}. 
\, \IEEEQEDhere
\end{defntn}
We denote by $\tilde{\mathcal{H}}$ the set of entropic vectors having $H(X_{1},X_{2},...,X_{N})=\sum_{i=1}^{N}H(X_{i})$,
where $\tilde{\mathcal{H}}\subset\Gamma_{N}^{*}$. The set of entropic
vectors in $\tilde{\mathcal{H}}$ is simply a hyperplane in $\Gamma_{N}^{*}$ that correspond to all entropic vectors with no
information redundancies, which captures the aggregation
models in \cite{ref12} \cite{ref13} \cite{ref141}.

The entropic vector can be constructed as follows. Given the set of
agents $\mathcal{N}$ and a corresponding set of random variables
$\mathcal{X}=\{X_{1},X_{2},...,X_{N}\}$, we construct the set $\mathcal{V}=\mathcal{P}(\mathcal{X})/\{\phi\}$,
where $\mathcal{P}(\mathcal{X})$ is the power set of $\mathcal{X}$.
If $\mathcal{V}=\{v_{1},v_{2},...,v_{|\mathcal{V}|}\}$, then the
entropic vector is given by $\overrightarrow{{\bf H}}=\left(H(X_{v_{i}})\right)_{i=1}^{|\mathcal{V}|}$,
where $|\mathcal{V}|=2^{N}-1$, and $H(X_{v_{i}})$ is the joint entropy
between all random variables in the set $v_{i}$. For instance, if
we have 3 agents in the network, then $\mathcal{V}=\{\{1\},\{2\},\{3\},\{1,2\},\{2,3\},\{1,3\},\{1,2,3\}\}$,
and the entropic vector $\overrightarrow{{\bf H}}$ is given by $\left(H(X_{1}),H(X_{2}),H(X_{3}),H(X_{1,2}),H(X_{1,3}),H(X_{2,3}),\right.$
$\left.H(X_{1,2,3})\right)^{T}$, where $H(X_{1,2})=H(X_{1},X_{2})$.
We denote a single element in the entropic vector as $\overrightarrow{{\bf H}}(v)=H(X_{v})$.
The mutual information between the random variables possessed by any
two subsets $\mathcal{W}$ and $\mathcal{U}$ of agents is given by
\cite{ref21} 
\[
I(X_{\mathcal{W}};X_{\mathcal{U}})=H(X_{\mathcal{W}})+H(X_{\mathcal{U}})-H(X_{\mathcal{W}},X_{\mathcal{U}}).
\]

The total amount of information in the network is given by the joint
entropy of the random variables of individual agents $H(\mathcal{X})=H(X_{1},X_{2},X_{3},...,X_{N}),$
where $H(\mathcal{X})\in\overrightarrow{{\bf H}}$.

The mutual information between any two agents $i$ and $j$ is given
by $I(X_{i};X_{j})=H(X_{i})-H(X_{i}|X_{j})$, where $H(X_{i}|X_{j})$
is the \textit{conditional entropy} which represents the additional
information attained by agent $j$ from connecting to $i$, i.e. the
amount of extra information that $j$ gets when getting the information
of $i$. If this benefit is low, it means that $I(X_{i};X_{j})$ is
high, i.e. $X_{i}$ and $X_{j}$ are highly correlated, and vice versa.
Note that mutual information is symmetric, i.e. $I(X_{i};X_{j})=H(X_{i})-H(X_{i}|X_{j})=H(X_{j})-H(X_{j}|X_{i})$.
Finally we quantify the total amount of redundant information in the network. Let ${\bf p}\left(\mathcal{X}\right) =p(X_{1},X_{2},...,X_{N})$
and ${\bf q}\left(\mathcal{X}\right) =\Pi_{i=1}^{N}p(X_{i})$, where $p(X_{i})$ is the pmf of $X_{i}$. The \textit{Kullback Leibler} (KL) divergence for these
distributions can be computed as follows \cite{ref21} 
\begin{align}
D\left({\bf p}||{\bf q}\right) &=\sum_{\mathcal{X}} {\bf p}\left(\mathcal{X}\right) \log\left(\frac{{\bf p}\left(\mathcal{X}\right)}{{\bf q}\left(\mathcal{X}\right)}\right) \nonumber \\
&=\sum_{i=1}^{N}H(X_{i})-H(X_{1},X_{2},...,X_{N}).
\label{KL22}
\end{align}
The KL divergence is a natural metric for quantifying the distance between probability measures, and it can be obtained in terms of the entropy as shown in (1). In particular, the KL divergence of ${\bf q}\left(\mathcal{X}\right)$ from ${\bf p}\left(\mathcal{X}\right)$ is equal to the difference between the amount of information possessed jointly by the agents, and the corresponding amount of information possessed by the same agents if such information has no redundancies. Throughout the paper, we use $H(X_{-i})$ to denote $H(\mathcal{X}/\{X_{i}\})$, and $\mbox{KL}(\mathcal{X}) = D\left({\bf p}||{\bf q}\right)$ to denote the KL divergence.

\subsection{Network formation game}

Agents benefit from gathering information by linking to other agents.
The link formation strategy adopted by agent $i$ is denoted by a
tuple ${\bf g}_{i}=(g_{ij})_{j\in\{1,...,N\}/\{i\}}\in\{0,1\}^{N-1}$;
$g_{ij}$ = 1 if agent $i$ forms a link with agent $j$ and $g_{ij}$
= 0 otherwise. We assume unilateral link formation where an agent
decides to form a link and solely bears the cost of link formation%
\footnote{Other link formation models, such as link formation with bilateral
consent, can be used with an appropriate solution concept such as
pairwise stability as we discuss in Section VII.}. A strategy profile ${\bf g}$ is defined as the collection of strategies
of all agents, i.e. ${\bf g}\defeq({\bf g}_{i})_{i=1}^{N}\in{\bf G}$,
where ${\bf G}$ is a finite space. When agent $i$ forms a link with
agent $j$, it incurs a cost of $c_{ij}$. We define the topology
of the network as $\mathcal{T}=\{(i,j)\in\mathcal{N}\times\mathcal{N}|\max\{g_{ij},g_{ji}\}=1\}$.
All connected agents exchange information bilaterally; thus $\mathcal{T}$
is an undirected graph. Information is shared between agents that
are indirectly connected and agents do not benefit from receiving
multiple versions of the same information from the same agent. Such
model is suitable for networks with multi-hop relaying where information
is forwarded from one node to another \cite{refz}. We write $i\to j$
to indicate that agent $j$ is reachable by agent $i$ either directly
or indirectly. Define the set of agents that $i$ form links with
(set of neighbors) as $\mathcal{N}_{i}({\bf g})=\{j|g_{ij}=1\}$,
and the set of agents reachable by agent $i$ as $\mathcal{R}_{i}({\bf g})=\{j|i\to j\}$.
Throughout the paper, we adopt the following definitions. 
\begin{defntn} \textbf{Network component-} a component
$\mathcal{C}$ is a set of agents such that $i\to j,\forall i,j\in\mathcal{C}$,
and $i\not\to j,\forall i\in\mathcal{C}$ and $j\notin\mathcal{C}$,
i.e. two agents in two different components cannot share information. \, \IEEEQEDhere 
\end{defntn}
\begin{defntn} \textbf{Minimally connected component-} a component is minimally connected if each agent $i\in\mathcal{C}$
is connected to each agent $j\in\mathcal{C}$ via a unique path. \, \IEEEQEDhere
\end{defntn}
Agents in a component share the information they possess and consequently attain ``informational" benefits that are captured via a utility function. The utility function of agent $i$ is given by 
\begin{equation}
u_{i}({\bf g})=f\left(H(X_{i\cup\mathcal{R}_{i}({\bf g})})\right)-\sum_{j\in\mathcal{N}_{i}({\bf g})}c_{ij},\label{eq02}
\end{equation}
where the function $f(.)$ represents the benefit of agent $i$ from
the information it gathers. We assume that the agents benefit from
acquiring information increases, while the marginal benefit decreases,
with the increase of the amount of information gathered. That is,
in a sensor network setting, the benefit of a sensor node from collecting
information saturates if it is connected to a large number of sensors;
thus, $f(.)$ is assumed to be twice continuously differentiable, increasing, and
concave with $f(0)=0$. Note that the total information acquired by
$i$ in (\ref{eq02}) can be written in terms of the conditional entropies
based on the chain rule as \cite{ref21} 
\[
H(X_{i\cup\mathcal{R}_{i}({\bf g})})=H(X_{i})+\sum_{k=1}^{|\mathcal{R}_{i}({\bf g})|}H(X_{j_{k}}|X_{i},\left\{X_{j_{m}}\right\}_{m=1}^{k-1}),
\]
where $\mathcal{R}_{i}({\bf g})=\{j_{1},j_{2},...,j_{|\mathcal{R}_{i}({\bf g})|}\}$,
which implies that agents benefit by acquiring new information conditioned
on its own information and the information it acquires from other
connections. Moreover, the aggregate information can be expressed
in terms of the mutual information as 
\[
H(X_{i\cup\mathcal{R}_{i}({\bf g})})=H(X_{i})+H(X_{\mathcal{R}_{i}({\bf g})})-I(X_{i};X_{\mathcal{R}_{i}({\bf g})}),
\]
where the term $H(X_{\mathcal{R}_{i}({\bf g})})$ represents the net
information that agent $i$ acquires after connecting to the agents
in $\mathcal{N}_{i}({\bf g})$, where the term $I(X_{i};X_{\mathcal{R}_{i}({\bf g})})$
captures the redundancy between the information of agent $i$ and
the information it acquires from the set $\mathcal{R}_{i}({\bf g})$.
Let ${\bf u}=(u_{1},u_{2},...,u_{N})$. Throughout the paper, we denote
the network formation game by $\mathcal{G}^{N}\langle\mathcal{N},{\bf G},{\bf u},\overrightarrow{{\bf H}}\rangle$.
We assume a complete information scenario, where all agents have knowledge
of the entropic vector $\overrightarrow{{\bf H}}$, the strategy space
${\bf G}$ and the utilities of all agents ${\bf u}$.

\subsection{Stability concept and network efficiency}

The link formation game is formulated as a non-cooperative simultaneous
move game and we focus on the Nash Equilibrium (NE) as the solution concept.
The NE is defined as follows 
\begin{equation}
u_{i}({\bf g}_{i}^{*},{\bf g}_{-i}^{*})\geq u_{i}({\bf g}_{i},{\bf g}_{-i}^{*}),\forall{\bf g}_{i}\in\{0,1\}^{N-1},\forall i\in\mathcal{N},\label{eq6}
\end{equation}
where ${\bf g}_{i}^{*}$ is the NE strategy of agent $i$, and ${\bf g}_{-i}^{*}$ is the NE strategy profile of all users other than
$i$. A strict NE is obtained by making the inequality in (\ref{eq6})
strict. The game can have multiple NE defined as ${\bf G^{*}}=\{{\bf g}^{*}|\,\,\forall u_{i}({\bf g}_{i}^{*},{\bf g}_{-i}^{*})\geq u_{i}({\bf g}_{i},{\bf g}_{-i}^{*}),\forall{\bf g}_{i}\in\{0,1\}^{N-1}\}$.
In the following Theorem, we show that there exists at least one network satisfying the NE conditions, i.e. ${\bf G^{*}}\neq\phi$.
\begin{thm}
{\it (The Existence of Nash Equilibrium)} A pure strategy NE always exists
for $\mathcal{G}^{N}=\langle\mathcal{N},{\bf G},{\bf u},\overrightarrow{{\bf H}}\rangle$. 
\end{thm}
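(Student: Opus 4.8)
Since $\mathcal{N}$ is finite and each strategy set $\{0,1\}^{N-1}$ is finite, $\mathcal{G}^{N}$ is a finite game, so the standard route to a pure NE is to show it is a potential game: exhibit a function $\Phi:{\bf G}\to\mathbb{R}$ that strictly increases along every profitable unilateral deviation. Any maximizer of $\Phi$ over the finite set ${\bf G}$ is then a pure NE, giving ${\bf G}^{*}\neq\phi$. Equivalently, it suffices to show that best-response dynamics cannot cycle, i.e. that $\mathcal{G}^{N}$ has the finite improvement property. So the whole proof reduces to producing such a $\Phi$ (or arguing termination of best responses).

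The first reduction I would make is to split $u_{i}({\bf g})=f(H(X_{i\cup\mathcal{R}_{i}({\bf g})}))-\sum_{j\in\mathcal{N}_{i}({\bf g})}c_{ij}$ into its informational part and its linking-cost part. The cost part depends only on $i$'s own strategy ${\bf g}_{i}$, hence is ``potential-compatible'': the term $-\sum_{k}\sum_{j\in\mathcal{N}_{k}({\bf g})}c_{kj}$ can be appended to any candidate $\Phi$ without affecting whether $\Phi$ tracks deviations. For the informational part I would use the fact that changing ${\bf g}_{i}$ alters only the edges of $\mathcal{T}$ incident to $i$, so a deviation by $i$ can only (a) merge other components into $i$'s component, (b) split or shrink $i$'s component, or a combination of these; every component not reachable from $i$ is untouched, and --- since linking is costly --- a best response never keeps a link to an already-reachable agent, so every NE is a forest and one may restrict the search to such networks, which keeps the component bookkeeping tractable.

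The crux, and where I expect the real difficulty, is the externality in case (a): when a new link of $i$ merges $i$'s component $A$ with a component $B$, agent $i$'s benefit rises by $f(H(X_{A\cup B}))-f(H(X_{A}))$, but every agent of $B$ simultaneously gains $f(H(X_{A\cup B}))-f(H(X_{B}))$, a different amount; and symmetrically the splitting deviations of case (b) pull a candidate potential in the opposite direction. This is precisely why the obvious candidates --- the social welfare $\sum_{k}u_{k}$ and the per-component sum $\sum_{C}f(H(X_{C}))$ --- fail to be potentials, and overcoming it is the heart of the argument. My plan is to run best-response dynamics from the empty network and exhibit a (generalized ordinal) potential tailored to this payoff structure, using concavity of $f$ (diminishing marginal benefit) together with the submodularity of entropy to bound how much an agent's reachable information, and hence its benefit, can change in one step, thereby ruling out improvement cycles; if a clean scalar $\Phi$ proves elusive, the fallback is a direct construction that adds links one at a time, keeping every link individually rational, and then verifying that the resulting forest admits no profitable deviation.
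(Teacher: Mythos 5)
There is a genuine gap: your proof never actually produces the object it says the whole argument reduces to. You correctly reduce the problem to exhibiting a (generalized ordinal) potential $\Phi$, or equivalently to proving the finite improvement property, and you correctly identify why the natural candidates (social welfare, $\sum_{\mathcal{C}}f(H(X_{\mathcal{C}}))$) fail — the asymmetric externality when a single link of $i$ merges components $A$ and $B$ and the agents of $B$ gain a different amount than $i$ does. But you then stop at ``my plan is to exhibit a potential tailored to this payoff structure,'' with concavity of $f$ and submodularity of entropy invoked only as hoped-for tools, not as steps in an argument. This is precisely the point where the proof has to happen, and nothing in the proposal rules out improvement cycles; indeed, the externalities you describe are exactly the mechanism that typically destroys the finite improvement property in connections-style network formation games, so it is not even clear that the route you chose can succeed. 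The fallback (``add links one at a time, keeping every link individually rational, then verify the resulting forest admits no profitable deviation'') has the same problem: the verification that no agent wants to add a link, delete a link, or simultaneously rewire several links at the terminal network is the entire content of the theorem, and it is left undone.

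For comparison, the paper avoids potentials altogether: it starts from Nash's existence theorem for mixed strategies in finite games, writes agent $i$'s expected utility as a function that is affine in each of $i$'s own linking probabilities $p_{ik}$, and observes that perturbing $p_{ik}$ changes the utility linearly with slope $\delta$; hence unless $\delta=0$ the agent strictly improves by pushing $p_{ik}$ to $0$ or $1$, so equilibrium strategies can be taken pure. If you want to salvage your approach, the most promising concrete path is not a scalar potential but an explicit construction of an equilibrium network (e.g., using the core-sponsored star components characterized later in the paper, built component by component from agents satisfying the marginal-benefit conditions of Theorem 2), followed by a direct verification of the no-deviation inequalities; but as written, the proposal is a research plan rather than a proof.
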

\begin{proof} See Appendix A. \, \IEEEQEDhere \end{proof}

The social welfare of the network formation game is defined as the
sum of agents' individual utilities. For a strategy profile ${\bf g}$,
the social welfare is defined as 
\begin{equation}
U({\bf g})\defeq\sum_{i\in\mathcal{N}}u_{i}({\bf g}).\label{eq7}
\end{equation}
A strategy profile ${\bf \tilde{g}}$ is called \textit{socially optimal}
if it maximizes the social welfare (achieves the social optimum $\tilde{U}$),
i.e. 
\begin{equation}
\tilde{U}\defeq U({\bf \tilde{g}})\geq U({\bf g}),\forall{\bf g}\in{\bf G}.\label{eq8}
\end{equation}
When there are multiple equilibria, we use two metrics to assess the
equilibrium efficiency. First, we adopt the \textit{Price of Anarchy}
(PoA) to quantify the impact of the agents' selfish behavior on the
social welfare. The PoA is defined as the ratio between the social
optimum and the lowest social welfare achieved at equilibrium, i.e.
\begin{equation}
\mbox{PoA}=\frac{\tilde{U}}{\min_{{\bf g^{*}}\in{\bf G^{*}}}U({\bf g^{*}})}.\label{eq9}
\end{equation}

In addition, we analyze the impact of the agents selfish behavior
on the information gathering process by defining a novel metric that
we term the \textit{Maximum Information Loss} (MIL). The MIL is defined
as the maximum difference between the amount of information gathered
by any agent at two different equilibria as shown in (\ref{regn1}).
Unlike the PoA, the MIL quantifies the maximum information loss without
considering the link cost. In addition, while the PoA considers the
welfare of \textit{all agents}, the MIL quantifies the highest information
loss incurred by an \textit{agent} in the worst case.

\begin{figure*}[!t]
\setcounter{mytempeqncnt}{\value{equation}} \setcounter{equation}{6}
\begin{equation}
\mbox{MIL}=\max_{i}\left(\sup_{{\bf g}_{u}^{*}\in G^{*}}H(X_{i}\cup X_{\mathcal{R}_{i}({\bf g}_{u}^{*})})-\inf_{{\bf g}_{v}^{*}\in G^{*}}H(X_{i}\cup X_{\mathcal{R}_{i}({\bf g}_{v}^{*})})\right).\label{regn1}
\end{equation}
\setcounter{equation}{\value{mytempeqncnt}+1} \hrulefill{}\vspace*{4pt}
 
\end{figure*}

\section{Nash Equilibrium Analysis for Homogeneous Link Costs}

In this section, we assume that the cost of forming a link between
any two agents $i$ and $j$ is given by $c_{ij}=c,\forall\, i,j\in\mathcal{N}$.
The goal of this section is to answer the following question: \textit{given
an entropic vector $\overrightarrow{{\bf H}}$, what are the network
topologies $\mathcal{T}$ that can emerge at an NE of the game $\mathcal{G}^{N}$
when the link costs are homogeneous?} We start with the following
motivating example to identify different factors that affect the equilibria
of $\mathcal{G}^{2}$. \setcounter{subsection}{0}

\subsection{Motivating example for two-agents interaction: does information redundancy
matter?}

Consider a simple network with only two agents ($N=2$) possessing
random variables $X_{1}$ and $X_{2}$. We aim at characterizing the
equilibria of $\mathcal{G}^{2}=\langle\{1,2\},{\bf G},{\bf u},\overrightarrow{{\bf H}}\rangle$.
The strategy of agent 1 is simply a linking decision $g_{12}\in\{0,1\}$,
while for agent 2, the strategy is $g_{21}\in\{0,1\}$. We write $\mathcal{G}^{2}$
in normal form in Table \ref{tab:title}, where the row player is
agent 2 and the column player is agent 1. Each cell displays the utilities
of agents 1 and 2 respectively. Assume that the link cost is the same
for both agents and equal to $c$. It can be easily shown that the
payoffs of agent 1 are given by $u_{1}(g_{12}=1,g_{21}=1)=u_{1}(g_{12}=1,g_{21}=0)=f\left(H(X_{1},X_{2})\right)-c$,
$u_{1}(g_{12}=0,g_{21}=1)=f\left(H(X_{1},X_{2})\right)$, and $u_{1}(g_{12}=0,g_{21}=0)=f\left(H(X_{1})\right)$.

\begin{table}[!h]
\captionsetup{font= small}
\caption{Two agent network formation game in normal form}
\label{tab:title}
\begin{center}
\centering
\begin{tabular}{ r|p{3cm}| p{3cm} |}
\multicolumn{1}{r}{}
 &  \multicolumn{1}{c}{$g_{12} = 1$}
 & \multicolumn{1}{c}{$g_{12} = 0$} \\
\cline{2-3}
$g_{21} = 1$ &  $u_{1}(g_{12} = 1, g_{21} = 1),$ \newline $u_{2}(g_{12} = 1, g_{21} = 1)$  & $u_{1}(g_{12} = 0, g_{21} = 1),$ \newline $u_{2}(g_{12} = 0, g_{21} = 1)$ \\
\cline{2-3}
$g_{21} = 0$ & $u_{1}(g_{12} = 1, g_{21} = 0),$ \newline $u_{2}(g_{12} = 1, g_{21} = 1)$ & $u_{1}(g_{12} = 0, g_{21} = 0),$ \newline $u_{2}(g_{12} = 0, g_{21} = 0)$ \\
\cline{2-3}
\end{tabular} 
\end{center}
\end {table}

Fig. \ref{figM1} depicts the entropic region $\Gamma_{2}^{*}$ of the two random variables $X_{1}$ and $X_{2}$. The entropic region $\Gamma_{2}^{*}$ can be easily constructed by applying the three Shannon inequalities $H(X_{1})\leq H(X_{1},X_{2})$, $H(X_{2})\leq H(X_{1},X_{2})$, and $H(X_{1})+H(X_{2})\geq H(X_{1},X_{2})$. The intersection of these three hyperplanes in $\mathbb{R}_{+}^{3}$ results in the polyhedral cone depicted in Fig. \ref{figM1}. The distance between an entropic vector (depicted by a thick dot inside $\Gamma_{2}^{*}$) and the corresponding entropic vector on $\tilde{\mathcal{H}}$  the light-colored hyperplane) with the same $H(X_{1})$ and $H(X_{2})$ is equal to the KL divergence. If $\mbox{KL}(X_{1},X_{2})=0$, then the entropic vector lies on $\tilde{\mathcal{H}}$, and the 2 agents have non-redundant information.

\begin{figure}[!h]
\centering
\begin{tikzpicture}[join=round]
    \tikzstyle{conefill} = [fill=blue!20,fill opacity=0.8]
    \tikzstyle{ann} = [fill=white,font=\footnotesize,inner sep=1pt]
    \tikzstyle{ghostfill} = [fill=white]
         \tikzstyle{ghostdraw} = [draw=black!50]			
	\draw[arrows=->,line width=.4pt](0,0,0)--(0,2.5,0);
	\draw[arrows=->,line width=.4pt](0,0,0)--(2.5,0,0);
	\draw[arrows=->,line width=.4pt](0,0,0)--(0,0,2.5);
		\tikzstyle{conefill} = [fill=red!20,fill opacity=0.8]	
    \filldraw[conefill](0,0,0)--(0,2.5,2.5)--(2.5,3.5,2.5)
                        --cycle;
		\tikzstyle{conefill} = [fill=blue!20,fill opacity=0.8]
    \filldraw[conefill](0,0,0)--(2,2,2)--(2.5,3.5,2.5)
                        --cycle;
    \draw[thick,arrows=<->](1.4,2.5,1)--(1.4,1.85,1);
		\draw [dashed] (1.4,1.85,1)--(1.4,-0.5,1);
		\draw [dashed] (-0.5,-0.5,0.9)--(1.4,-0.5,0.9);
		\draw [dashed] (1.4,-0.5,-1.2)--(1.4,-0.5,0.9);
		\fill(1.4,1.8,1) circle (2pt);
		\tikzstyle{conefill} = [fill=blue!20,fill opacity=0.8]
    \filldraw[conefill](0,0,0)--(0,2.5,2.5)--(2,2,2)
                        --cycle;
    |
		\draw[dashed, arrows=<-,line width=.4pt](1.52,2.2,1.2)--(2.25,2.2,-0.5);
    \node[ann] at (3.25,2.2,-0.5) {$\mbox{KL}(X_{1},X_{2})$};
    \node[ann] at (-0.5,2.5,2.5) {$\Gamma_{2}^{*}$};
    \path (-0.5,0.5,2.5) node[ann] {$H(X_{1})$}
        (0.5,3,0.5) node[ann] {$H(X_{1},X_{2})$}
        (3,0.5,0.5) node[ann] {$H(X_{2})$};
				
				\draw(0,0,0)--(0,3.5,3.5);
				\draw(0,0,0)--(2,2,0);
				\draw(2.5,3.5,2.5)--(3,4.25,2.5);

\end{tikzpicture}
\captionsetup{font= small}
\caption{The entropic region $\Gamma_{2}^{*}$ for 2 random variables.} 
\label{figM1}
\end{figure}
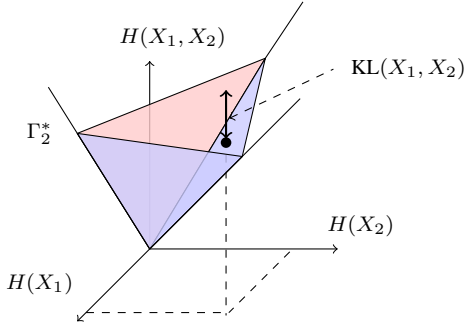

The equilibria of this game depend on both the link cost and the entropic
vector, which corresponds to the amount of information redundancy. For an arbitrary entropic vector, the game has two possible equilibria ${\bf g}^{*}=(g_{12}=1,g_{21}=0)$
and ${\bf g}^{*}=(g_{12}=0,g_{21}=1)$ if $c\leq f(H(X_{1},X_{2}))-f\left(\max\{H(X_{1}),H(X_{2})\}\right)$.
Assume that $H(X_{1})>H(X_{2})$. Therefore, the network has a unique
equilibrium ${\bf g}^{*}=(g_{12}=0,g_{21}=1)$ when $f(H(X_{1},X_{2}))-f(H(X_{1}))\leq c\leq f(H(X_{1},X_{2}))-f(H(X_{2}))$,
and a unique equilibrium ${\bf g}^{*}=(g_{12}=0,g_{21}=0)$ when $c\geq f(H(X_{1},X_{2}))-f(H(X_{2}))$.
On the other hand, if we fix the link cost and the entropies $H(X_{1})$
and $H(X_{2})$, we observe that the equilibria change by changing
the KL divergence. For instance, the network has two equilibria ${\bf g}^{*}=(g_{12}=1,g_{21}=0)$
and ${\bf g}^{*}=(g_{12}=0,g_{21}=1)$ when $c\leq f(H(X_{1})+H(X_{2})-\mbox{KL}(\mathcal{X}))-f\left(\max\{H(X_{1}),H(X_{2})\}\right)$.
Thus, as the entropic vector becomes closer to the hyperplane $\tilde{\mathcal{H}}$,
i.e. $\mbox{KL}(\mathcal{X})$ decreases, the cost threshold for which
these two equilibria emerge increases. This means that the characterization
of the NE is sensitive to the amount of information redundancy $\mbox{KL}(\mathcal{X})$,
even if we fix the individual entropies $H(X_{1})$ and $H(X_{2})$.
Note that the strategy profile ${\bf g}=(g_{12}=1,g_{21}=1)$ never
emerges as an NE since under such profile any of the two agents can break the link formed and get a strictly higher utility. 

\subsection{Characterization of the NE for $\mathcal{G}^{N}$}
In this subsection, we present a generic characterization for the NE of $\mathcal{G}^{N}$.
\begin{prp} (Network minimality) \textit{In every NE, all
network components are minimally connected.} \end{prp}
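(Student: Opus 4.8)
The plan is to argue by contradiction: if some NE strategy profile $\mathbf{g}^*$ induced a component $\mathcal{C}$ that is not minimally connected, I would exhibit an agent inside $\mathcal{C}$ with a strictly profitable unilateral deviation, contradicting the NE condition~(\ref{eq6}). First I would recast the hypothesis in graph-theoretic terms: since a component is minimally connected exactly when every two of its agents are joined by a unique path, a component that fails to be minimally connected, viewed as the undirected graph $\mathcal{T}$ restricted to $\mathcal{C}$, must contain a cycle, i.e.\ a set of distinct edges $(i_1,i_2),(i_2,i_3),\ldots,(i_\ell,i_1)$ with $\ell\ge 3$.

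Next I would pick one edge on this cycle, say $(i,j)$ with $i=i_1$, $j=i_2$. Because an edge belongs to $\mathcal{T}$ whenever $\max\{g^*_{ij},g^*_{ji}\}=1$, at least one endpoint formed it; relabelling if necessary I assume $g^*_{ij}=1$, so agent $i$ pays $c_{ij}$ under $\mathbf{g}^*$. I then consider the deviation $\mathbf{g}_i$ that agrees with $\mathbf{g}^*_i$ in every coordinate except that $g_{ij}=0$. The key claim is that this deviation leaves agent $i$'s reachable set unchanged, $\mathcal{R}_i(\mathbf{g}_i,\mathbf{g}^*_{-i})=\mathcal{R}_i(\mathbf{g}^*)$: dropping one of $i$'s links can never render a new agent reachable, and it also cannot disconnect $i$ from any agent in $\mathcal{C}$, because the edge $(i,j)$ (if it is removed at all — it is not removed when $g^*_{ji}=1$) lies on a cycle, so the remaining cycle edges $(i_2,i_3),\ldots,(i_\ell,i_1)$ still provide a path from $i$ to $j$, and all of these edges are untouched by $i$'s deviation. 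Hence $\mathcal{C}$ remains connected and $\mathcal{R}_i$ is preserved. Since $f\big(H(X_{i\cup\mathcal{R}_i(\mathbf{g})})\big)$ depends on the profile only through $\mathcal{R}_i$, the informational term in~(\ref{eq02}) is unchanged while $\sum_{k\in\mathcal{N}_i}c_{ik}$ drops by $c_{ij}>0$, giving $u_i(\mathbf{g}_i,\mathbf{g}^*_{-i})>u_i(\mathbf{g}^*)$, the desired contradiction.

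The only genuinely delicate step is the reachability claim, and the hard part there is the ``no agent is lost'' direction — the elementary but essential fact that deleting an edge lying on a cycle does not increase the number of connected components. I would prove it directly: any $k\in\mathcal{C}$ reachable from $i$ under $\mathbf{g}^*$ is connected to $i$ by some path $P$ in $\mathcal{T}$; if $P$ does not use $(i,j)$ it still works after the deviation, and if it does, I replace the single step along $(i,j)$ by the detour $i_2,i_3,\ldots,i_\ell,i_1$ along the rest of the cycle, obtaining a walk from $i$ to $k$ avoiding $(i,j)$ and hence a path that survives the deviation. A minor bookkeeping point worth spelling out is the case $g^*_{ij}=g^*_{ji}=1$: then $i$'s deviation does not even remove the edge from $\mathcal{T}$, so $\mathcal{R}_i$ is trivially unchanged and the cost still falls by $c_{ij}$ — in particular this already shows that at an NE no edge is formed by both of its endpoints.
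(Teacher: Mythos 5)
Your proposal is correct and follows essentially the same route as the paper's proof: a non-minimally-connected component contains a cycle, an agent sponsoring an edge on that cycle can delete it, reachability (and hence the informational benefit) is preserved via the detour along the rest of the cycle, and the saved link cost yields a strictly profitable deviation contradicting the NE condition. Your write-up is in fact somewhat more careful than the paper's, since you explicitly prove the path-replacement step and handle the bilaterally sponsored edge case $g^*_{ij}=g^*_{ji}=1$.
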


\begin{proof} See Appendix B. \, \IEEEQEDhere \end{proof}

Proposition 1 implies that agents in each component will form the
minimal number of links possible to gather the maximum amount of information. This results from indirect information sharing within each
network component, i.e. if there exists a path to an agent then there
is no extra benefit in making a direct link to that agent since all
the information from that agent is already accessible.

Next, we characterize the connectivity of the network as a function
of the link cost in the following Lemma. 

\begin{lem} {\it (Network connectivity regions)}
\begin{enumerate} [(i)]
\item If $c\leq c_{l}$, with $c_{l}=f\left(H(\mathcal{X})\right)-f(\min_{i}H(X_{-i}))$, then, at every NE (a) the network is minimally connected (the network has one component) and (b) the amount of information possessed by each agent is $H(\mathcal{X})$ (all information is shared). 
\item If $c\geq c_{u}$, where $c_{u}=f\left(H(\mathcal{X})\right)-f(\min_{i}H(X_{i}))$, then there is a unique NE which is strict. At this equilibrium, the network is fully disconnected and the amount of information possessed by each agent $i$ is $H(X_{i})$ (no information is shared).
\end{enumerate}
\end{lem}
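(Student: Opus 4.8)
The plan for both parts is the same: I would assume an NE that violates the stated property and exhibit a single-agent deviation with strictly larger payoff, contradicting the NE condition~(\ref{eq6}). The tools are the monotonicity and concavity of $f$ together with $f(0)=0$, the elementary monotonicity of (conditional) entropy, the chain rule, and Proposition~1 (every component of an NE network is a tree). I would also record the small fact that at an NE no edge is owned by both of its endpoints: if $g_{ij}=g_{ji}=1$, either endpoint can reset its own entry to $0$ without changing the topology $\mathcal{T}$, hence without changing any reachable set, thereby saving the cost $c>0$; thus each edge of the network has a unique owner.

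For part~(i) I would first note that connectivity already yields~(b): in a one-component network every agent reaches every other agent and so possesses $H(X_{\mathcal{N}})=H(\mathcal{X})$. So it suffices to exclude a disconnected NE $\mathbf{g}^{*}$. Let $i_{0}$ attain $\min_{k}H(X_{-k})$; if this minimum equals $H(\mathcal{X})$ then $c_{l}=0$ and there is nothing to prove (costs are positive), so assume $H(X_{-i_{0}})<H(\mathcal{X})$. Let $\mathcal{C}_{0}$ be $i_{0}$'s component; disconnectedness forces $\mathcal{C}_{0}\neq\mathcal{N}$, so some agent $j$ lies in a different component $\mathcal{C}_{1}$, where it currently possesses $H(X_{\mathcal{C}_{1}})$. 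I would have $j$ deviate by adding one new link, namely to $i_{0}$; then $j$ reaches all of $\mathcal{C}_{0}\cup\mathcal{C}_{1}$ at an extra cost $c$, so its utility change is $\Delta u_{j}=f(H(X_{\mathcal{C}_{0}\cup\mathcal{C}_{1}}))-f(H(X_{\mathcal{C}_{1}}))-c$. The crux is the inequality $f(H(X_{\mathcal{C}_{0}\cup\mathcal{C}_{1}}))-f(H(X_{\mathcal{C}_{1}}))\geq c_{l}$. Set $a=H(X_{\mathcal{C}_{1}})$, $b=H(X_{-i_{0}})$, $\delta=H(X_{i_{0}}|X_{-i_{0}})$ and $\delta'=H(X_{\mathcal{C}_{0}}|X_{\mathcal{C}_{1}})$. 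Then $a\leq b$ because $\mathcal{C}_{1}\subseteq\mathcal{N}\setminus\{i_{0}\}$; $\delta'\geq\delta$ because $i_{0}\in\mathcal{C}_{0}$ and $\mathcal{C}_{1}\subseteq\mathcal{N}\setminus\{i_{0}\}$ give $H(X_{\mathcal{C}_{0}}|X_{\mathcal{C}_{1}})\geq H(X_{i_{0}}|X_{\mathcal{C}_{1}})\geq H(X_{i_{0}}|X_{-i_{0}})$; and the chain rule gives $a+\delta'=H(X_{\mathcal{C}_{0}\cup\mathcal{C}_{1}})$ and $b+\delta=H(\mathcal{X})$. Since $f$ is increasing (a larger increment helps) and concave (a smaller base helps), $f(a+\delta')-f(a)\geq f(a+\delta)-f(a)\geq f(b+\delta)-f(b)=f(H(\mathcal{X}))-f(H(X_{-i_{0}}))=c_{l}$, whence $\Delta u_{j}\geq c_{l}-c\geq0$, strictly positive when $c<c_{l}$ --- the contradiction. (At the knife-edge $c=c_{l}$ the deviation is only weakly improving, so the statement there needs ties to be broken toward linking; this is the one place strictness of the threshold matters.)

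For part~(ii) I would first verify that the empty network is an NE: an agent forming a nonempty link set $S$ obtains $f(H(X_{i\cup S}))-|S|c\leq f(H(\mathcal{X}))-c\leq f(H(\mathcal{X}))-c_{u}=f(\min_{k}H(X_{k}))\leq f(H(X_{i}))$, with every inequality strict once $c>c_{u}$, so it is a strict NE. For uniqueness, suppose $\mathbf{g}^{*}$ is an NE whose topology has an edge; by Proposition~1 some component $\mathcal{C}$ is a tree on $m\geq2$ vertices, and I would pick a leaf $\ell$ of $\mathcal{C}$ with its unique incident edge $e=\{\ell,p\}$, which has a unique owner. If $\ell$ owns $e$, then $\ell$ deletes it and becomes isolated, changing its payoff by $c-\big(f(H(X_{\mathcal{C}}))-f(H(X_{\ell}))\big)\geq c-\big(f(H(\mathcal{X}))-f(\min_{k}H(X_{k}))\big)=c-c_{u}>0$. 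If $p$ owns $e$, then $p$ deletes it, severing exactly the leaf $\ell$, so $p$'s information falls from $H(X_{\mathcal{C}})$ to $H(X_{\mathcal{C}\setminus\{\ell\}})$ while it saves $c$; since $H(X_{\mathcal{C}})\leq H(\mathcal{X})$ and $H(X_{\mathcal{C}\setminus\{\ell\}})\geq\min_{k}H(X_{k})$ (as $\mathcal{C}\setminus\{\ell\}$ is nonempty), $f(H(X_{\mathcal{C}}))-f(H(X_{\mathcal{C}\setminus\{\ell\}}))\leq c_{u}<c$ and $p$ strictly profits. Either way some agent has a strictly improving deviation, a contradiction; hence the NE topology is empty, each agent $i$ possesses $H(X_{i})$, and the network is fully disconnected. (As in part~(i), what one actually gets is the strict-inequality version, $c>c_{u}$; the boundary $c=c_{u}$ is an exception.)

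I expect the entropy--concavity estimate in part~(i) to be the main obstacle. A single added link merges only two components, so one cannot simply say the deviator now holds $H(\mathcal{X})$; instead the argument must route the ``most private'' information $H(X_{i_{0}}|X_{-i_{0}})$ of the agent $i_{0}$ with smallest $H(X_{-i})$, and this is exactly what dictates choosing the deviating agent outside $i_{0}$'s component and the target equal to $i_{0}$, followed by pairing the larger increment $\delta'$ with the smaller base $a$ under concavity. Part~(ii) is comparatively routine once one examines a leaf of a tree component; the one case needing a moment's thought is a pendant edge owned by its interior endpoint, where the information lost on severing it is only the leaf's marginal entropy and is bounded by $c_{u}$ by entropy monotonicity alone.
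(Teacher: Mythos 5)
Your proof is correct and follows the same strategy as the paper's Appendix C: both arguments hinge on the two extremal agents $i_{0}=\arg\min_{i}H(X_{-i})$ (the least attractive linking target, used to force connectivity in part (i)) and $\arg\min_{i}H(X_{i})$ (used to kill all links in part (ii)). The difference is that you actually prove the two steps the paper merely asserts. For part (i) the paper claims the marginal benefit of linking to $i_{0}$ exceeds $c$ ``irrespective of the current connections of the agent forming the link''; your chain $f(a+\delta')-f(a)\geq f(a+\delta)-f(a)\geq f(b+\delta)-f(b)=c_{l}$, combining monotonicity of conditional entropy with concavity of $f$, is exactly the justification this assertion needs, and it is the nontrivial core of the lemma. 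For part (ii) the paper only argues that no agent has an incentive to \emph{form} a link, which establishes that the empty network is an NE but not that it is the unique one; your leaf-deletion argument on a tree component (via Proposition 1 and the single-owner observation) supplies the missing uniqueness step, splitting correctly on whether the pendant edge is owned by the leaf or by its interior neighbor. Your caveats about the knife-edge costs $c=c_{l}$ and $c=c_{u}$ are legitimate: the paper itself wavers between weak and strict inequalities in its proof, and at those boundary values an indifferent agent can support a deviating equilibrium, so the clean statements really do require strict inequalities (or a tie-breaking convention).
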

\begin{proof} See Appendix C. \, \IEEEQEDhere \end{proof}

From the above Lemma, we can see that three factors affect the connectivity
of a network: the link cost, the amount of information possessed by each agent, and
the redundancies among the agents' information. Based on the result of Lemma 1, we define three regions for the  connectivity of the NE networks based on the link cost as follows: 
\begin{itemize}
\item \textit{Connected agents region} ($\mathcal{K}_{C}$): A network with an entropic vector $\overrightarrow{{\bf H}}$ has a single component when the link cost is $c\leq c_{l}$. 
\item \textit{Isolated agents region} ($\mathcal{K}_{I}$): The network
has $N$ components when the link cost is $c\geq c_{u}$. 
\item \textit{Mixed region} ($\mathcal{K}_{M}$): Depending on the entropic
vector, the network can have different number of components ranging
from 1 to $N$ when the link cost is $c_{l}\leq c\leq c_{u}$. 
\end{itemize}
While the connectivity regions describe the impact of link cost
on network topology, they also have informational significance. For
instance, the amount of information possessed by any agent in the
$\mathcal{K}_{C}$ region is $H(\mathcal{X})$, while in the $\mathcal{K}_{I}$
region, no agent $i$ gathers any extra information other than its
own intrinsic information $H(X_{i})$. On the other hand, agents in
the $\mathcal{K}_{M}$ region can end up gathering different amounts
of information as there are potentially multiple equilibria with different
topologies and connectedness. In the following illustrative example, we demonstrate the impact of the link cost and information redundancy on the network's connectivity regions. 



\begin{illexamp} To illustrate the impact of information redundancy and link cost on the NE networks' connectivity, we plot the $\mathcal{K}_{M}$, $\mathcal{K}_{C}$, and $\mathcal{K}_{I}$ regions in the link cost-information redundancy plane for 2 different families of entropic vectors. Assume that
we have a 3-agent CIN, with $H(X_{1})>H(X_{2})$, and $H(X_{2})=H(X_{3})$, and that agent 1 has non-redundant information, i.e. the random variable $X_{1}$ is independent on $X_{2}$ and $X_{3}$. Thus, we have $\mbox{KL}(\mathcal{X})=I(X_{2};X_{3})$. We consider two different families of entropic vectors (i.e. two different assignments for the values of individual agents' entropies), the first is given by $(H^{1}(X_{1}) = 5, H^{1}(X_{2}) = 4, H^{1}(X_{3}) = 4)$, whereas the second is given by $(H^{2}(X_{1}) = 7, H^{2}(X_{2}) = 4, H^{2}(X_{3}) = 2)$. The connectivity regions associated with entropic vector family $i$ is denoted by $\left(\mathcal{K}_{C}^{i},\mathcal{K}_{I}^{i},\mathcal{K}_{M}^{i}\right)$. An exemplary utility function of $f(x)=\log(1+x)$ is used. In Fig. 2, we plot the connectivity regions in the cost-KL divergence plane for the 2 families of entropic vectors. For both families of entropic vectors, the $\mathcal{K}_{M}$ region shrinks as the information redundancy increases. That is, when agents share more information in common, the NE network connectivity becomes less ``uncertain" since the $\mathcal{K}_{M}$ region (which is the only region with potentially multiple equilibria with different levels of connectivity) in this case will correspond to a limited range of link costs. Moreover, we note that for the first family of entropic vectors, when agents 2 and 3 information are fully redundant (i.e. $\mbox{KL}(\mathcal{X}) = 4$), we have a sharp threshold on the link cost, below which we have a connected network, and above which we have a fully disconnected network (i.e. the $\mathcal{K}_{M}$ region is empty). The intuition behind this is that since agents 2 and 3 are fully ``correlated", they only benefit from connecting to agent 1. Thus, agent 1 acts as the only information source, and it is the benefit from getting agent's 1 information that solely determines the cost at which the network would be connected or not. If agents 2 and 3 information are not redundant, they add value to the network, and the cost thresholds become dependent on their information as well. However, for the second family of entropic vector, since there is more heterogeneity in the amount of information possessed by the agents, no single agents monopolizes the information at any value of the KL divergence, thus the $\mathcal{K}_{M}$ region does not vanish for the second
vector for any value of $\mbox{KL}(\mathcal{X})$. \, \IEEEQEDhere \end{illexamp}

While Lemma 1 focuses on the impact of link cost on the connectivity of the network, it does not provide a complete characterization for an NE network. In the next Theorem, we give the necessary and sufficient conditions for the emergence of an arbitrary CIN topology in NE.

\begin{figure}[!t]
\centering
\includegraphics[width=3in]{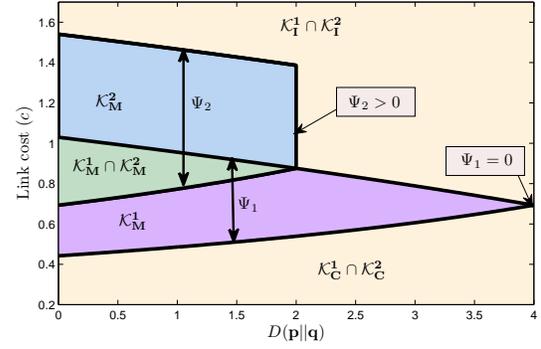}
\captionsetup{font= small}
\caption{Impact of link cost and information redundancy on the network's connectivity.}
\label{fig_pase3}
\end{figure}

\begin{thm}
A network in which the components are precisely $\{\mathcal{C}_{1},\mathcal{C}_{2},...,\mathcal{C}_{K}\}$
can be supported in a NE if and only if the following relationships
between the cost and the value of information are satisfied 
\begin{enumerate}
\item $f(H(X_{\mathcal{C}_{i}}))-\min\{f(H(X_{\mathcal{C}_{i}/\{j\}})),f(H(X_{j}))\}\geq c$,
$\forall i\in\{1,2,...,K\},j\in\mathcal{C}_{i}$. 
\item $f(H(X_{\mathcal{C}_{i}\cup\mathcal{C}_{j}}))-f(H(X_{\mathcal{C}_{i}}))\leq c,\,\forall i,j\in\{1,2,...,K\}$. 
\end{enumerate}
\end{thm}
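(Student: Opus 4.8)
The statement is an ``if and only if'', so there are two tasks: showing that any NE whose component structure is exactly $\{\mathcal{C}_1,\dots,\mathcal{C}_K\}$ forces conditions~1--2, and conversely building an NE with exactly those components out of conditions~1--2. Both rest on two facts that are already available. By Proposition~1 every component $\mathcal{C}_i$ of an NE is a tree on its own agent set. And at an NE no link is formed by \emph{both} of its endpoints, since the ``redundant'' owner could drop its copy, keep the same reachable set and save $c$; so each tree edge may be regarded as owned (and paid for) by exactly one of its two endpoints.

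For the necessity direction I would read the two conditions off the no-deviation inequalities. Condition~2 is the ``no bridging'' constraint: let $a\in\mathcal{C}_i$ add one link to an agent of $\mathcal{C}_j$; the components merge, $a$'s reachable set grows from $\mathcal{C}_i$ to $\mathcal{C}_i\cup\mathcal{C}_j$, its cost grows by $c$, and non-profitability gives $f(H(X_{\mathcal{C}_i\cup\mathcal{C}_j}))-f(H(X_{\mathcal{C}_i}))\le c$; since the post-deviation reachable set is independent of which $a\in\mathcal{C}_i$ was used, this is exactly condition~2 for the ordered pair $(i,j)$. Condition~1 is the ``no pruning'' constraint for $j\in\mathcal{C}_i$: look at the edges of the tree $\mathcal{C}_i$ incident to $j$. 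If $j$ owns all of them, $j$ can delete all of them and become isolated, forcing $f(H(X_{\mathcal{C}_i}))-f(H(X_j))\ge c$. If instead some incident edge is owned by a neighbour $k$ and $j$ sits at a leaf, $k$ can delete that edge and be left with precisely $\mathcal{C}_i/\{j\}$, forcing $f(H(X_{\mathcal{C}_i}))-f(H(X_{\mathcal{C}_i/\{j\}}))\ge c$. Since a priori one of these situations occurs, the inequality that is always guaranteed is $f(H(X_{\mathcal{C}_i}))-\min\{f(H(X_{\mathcal{C}_i/\{j\}})),f(H(X_j))\}\ge c$, i.e.\ condition~1.

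For the sufficiency direction I would exhibit an explicit NE: make each component a star, centring $\mathcal{C}_i$ at an agent $c_i^\ast$ of maximal individual entropy, let every other agent of $\mathcal{C}_i$ form and pay for its single link to $c_i^\ast$, and place no further links. I then check no deviation is profitable. For a peripheral $p$ of $\mathcal{C}_i$: deleting its link isolates it, and non-profitability reads $f(H(X_{\mathcal{C}_i}))-f(H(X_p))\ge c$, which is precisely condition~1 for $p$, because $p$ is not the maximiser, so $H(X_{\mathcal{C}_i/\{p\}})\ge H(X_{c_i^\ast})\ge H(X_p)$ and the minimum in condition~1 equals $f(H(X_p))$; rewiring $p$ to another peripheral leaves its reachable set equal to $\mathcal{C}_i$ at the same cost; adding a link inside $\mathcal{C}_i$ only raises cost; and adding a link to, or rewiring into, another component $\mathcal{C}_m$ gives $p$ a reachable set contained in $\mathcal{C}_i\cup\mathcal{C}_m$, so non-profitability should follow from condition~2 for the pairs $(i,m)$ and $(m,i)$ together with concavity of $f$ and submodularity of entropy (e.g.\ $H(X_{\{p\}\cup\mathcal{C}_m})+H(X_{\mathcal{C}_i})\ge H(X_{\mathcal{C}_i\cup\mathcal{C}_m})+H(X_p)$). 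The centre $c_i^\ast$ pays nothing and reaches all of $\mathcal{C}_i$, so its only candidate deviation is to bridge to another component, again blocked by condition~2. Assembling these verifications over all agents and all components yields the desired NE.

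The step I expect to be the main obstacle splits along the two directions. For necessity, the clean argument above handles an agent $j$ only when it owns all its incident links or sits at a leaf; a general ``internal'' agent $j$ can disconnect nothing by itself, and when one of its neighbours prunes a link the residual set is messier than $\mathcal{C}_i/\{j\}$, so obtaining condition~1 on the nose for such $j$ requires finer structural information about NE components than bare minimal connectedness (in effect, that each component is a star, so that every non-centre agent is a leaf). Dually, for sufficiency the delicate check is the cross-component rewiring deviation --- dropping the link used inside $\mathcal{C}_i$ and reconnecting into $\mathcal{C}_m$ at no change of cost --- which is not killed by condition~2 through a one-line rearrangement; making it work appears to need both orderings of condition~2 combined carefully with the concavity of $f$ and entropy submodularity, and this is where I would concentrate the effort.
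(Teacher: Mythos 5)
Your strategy mirrors the paper's for the necessity direction (reading each condition off a no-deviation inequality: condition 2 from ``no bridging'', condition 1 from ``no pruning'') and then goes beyond the paper for sufficiency, where you attempt an explicit star construction; the paper's own Appendix D only argues necessity, informally, and never constructs an equilibrium for the ``if'' direction at all. More importantly, the two obstacles you flag are genuine, and the paper's proof silently steps over both. For necessity, your worry about internal agents is correct. Take $\mathcal{C}=\{a,j,b\}$ with $X_a=Y_1$, $X_b=Y_2$ independent fair bits, $X_j=(Y_1,Y_2)$, $f$ the identity, $c=1/2$, and the line $a\to j\leftarrow b$ with each leaf sponsoring its own link. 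Each leaf gains $f(2)-f(1)=1>c$ from its link and $j$ sponsors nothing, so this is an NE whose unique component is $\{a,j,b\}$; yet condition 1 for $j$ reads $f(2)-\min\{f(2),f(2)\}=0\geq c$, which fails. No unilateral deviation ever isolates an internal agent, so the inequality simply is not forced for such $j$ --- exactly as you suspected, and the paper's phrase ``this agent can get disconnected'' is the step that breaks.

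For sufficiency, the cross-component rewiring deviation you single out cannot be repaired by condition 2 plus submodularity: those inequalities only bound the rewiring gain by $c$, while the deviation itself is cost-neutral. Concretely, let $f$ be the identity, let $A,B,C$ be independent with entropies $3,1,2$, set $X_p=X_r=A$, $X_q=B$, $X_s=C$, $c=2.5$, and target components $\{p,q\}$ and $\{r,s\}$. Conditions 1 and 2 hold ($4-1=3\geq c$ and $5-2=3\geq c$ for condition 1; $6-4=2\leq c$ and $6-5=1\leq c$ for condition 2), but whichever of $p,q$ sponsors the single link of $\{p,q\}$ strictly gains by redirecting it into $\{r,s\}$, reaching joint entropy $5$ (or $6$) instead of $4$ at unchanged cost; so no NE has these components. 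In short, your decomposition of the problem is the right one and is more complete than the paper's own argument, but the two steps you could not close cannot be closed for the statement as literally written: condition 1 is only forced at peripheral (leaf) agents, and an additional no-rewiring condition comparing $H(X_{\{p\}\cup\mathcal{C}_m})$ with $H(X_{\mathcal{C}_i})$ is needed for the converse.
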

\begin{proof} See Appendix D. \, \IEEEQEDhere \end{proof}

From Theorem 2 we know that, at NE, the network is generally composed of multiple components and each component is minimally connected. Each component possesses a set of random variables that are jointly highly correlated to the joint random variables possessed by other components. Condition (1) in Theorem 2 implies that each agent in a component either benefits from forming a link to some other agent in that component, or other agents benefit from linking to it, while condition (2) implies that agents in different components have no incentives to connect to agents in other components. Note that due to indirect information sharing, many equilibria can exist with highly variant topologies. In the subsequent Theorem, we refine the equilibrium notion used, and we determine the specific topologies emerging in a strict NE.
\begin{thm}
A network is a strict NE if and only if the following conditions are
simultaneously satisfied 
\begin{itemize}
\item All conditions stated in Theorem 2 are satisfied. 
\item For each component $\mathcal{C}$ of size $M>1$, there exists a set
$\zeta\subseteq\mathcal{C}$ with $|\zeta|\geq M-1$ such that 
\[
\zeta=\{j\,|\, f(H(X_{\mathcal{C}}))-f(H(X_{\mathcal{C}/\{j\}}))>c\}.
\]

\item Each non-singleton component forms a core-sponsored star topology, where the periphery agents belong to the set $\zeta$. 
\end{itemize}
\end{thm}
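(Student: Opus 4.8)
The plan is to prove the two implications separately. Since a strict NE is in particular an NE, Proposition~1 and Theorem~2 give for free that at a strict NE every component is a tree and the component partition $\{\mathcal{C}_1,\dots,\mathcal{C}_K\}$ satisfies conditions~(1)--(2) of Theorem~2; one should also note the elementary fact that at \emph{any} NE each edge of $\mathcal{T}$ is sponsored by exactly one endpoint, since if $g_{ij}=g_{ji}=1$ either endpoint could delete its own link, keep the same reachable set, and save $c>0$. Thus the genuine content is: (a) strictness forces every non-singleton component to be a core-sponsored star; (b) strictness forces the stated $\zeta$ condition; and (c) conversely these structural facts, together with (the strict form of) Theorem~2's conditions, rule out every weakly improving unilateral deviation.

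For (a) I would use a \emph{rewiring} argument. Take a non-singleton component, which is a tree, and suppose $g_{ij}=1$. Deleting the edge $(i,j)$ splits the tree into $T_i\ni i$ and $T_j\ni j$; if $j$ is not a leaf with $i$ its unique neighbour, then $j$ has a tree-neighbour $k\neq i$, necessarily with $k\in T_j$ and $g_{ik}=g_{ki}=0$ (otherwise $i,j,k$ form a triangle). The deviation setting $g_{ij}=0$, $g_{ik}=1$ and leaving all of $i$'s other links untouched preserves $i$'s link count, hence its cost, and rebuilds a spanning tree of the component, so $i$'s reachable set and payoff are unchanged---contradicting strictness. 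Hence the sponsee of every sponsored edge is a leaf; combined with ``one sponsor per edge'', every edge of the tree has a leaf endpoint, and a tree on $M\ge 3$ vertices with this property has a unique internal vertex (two internal vertices would be joined by a path whose first edge joins two internal vertices) and is therefore the star centred at it, with the centre forced to sponsor all $M-1$ links (a peripheral agent sponsoring its own link would be sponsoring the non-leaf centre); the $M=2$ case is a degenerate star. For (b): with centre $h$ and periphery $P=\mathcal{C}\setminus\{h\}$ one has $u_h=f(H(X_{\mathcal{C}}))-(M-1)c$, and if $h$ drops only its link to $j\in P$ then $j$ is isolated and $h$'s payoff changes by $\big(f(H(X_{\mathcal{C}\setminus\{j\}}))-f(H(X_{\mathcal{C}}))\big)+c$, so strictness forces $f(H(X_{\mathcal{C}}))-f(H(X_{\mathcal{C}\setminus\{j\}}))>c$, i.e.\ $j\in\zeta$; as this holds for all $M-1$ peripheral agents, $P\subseteq\zeta$ and $|\zeta|\ge M-1$.

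For the converse (c) I would start from a network built of singletons and core-sponsored stars obeying (the strict form of) Theorem~2's conditions with $P_i\subseteq\zeta_i$, and check that no agent has a weakly improving deviation. A peripheral agent $j$ of $\mathcal{C}$ sponsors nothing and already reaches all of $\mathcal{C}$, so any deviation must add a link: an extra intra-component link is redundant and costs $c>0$, while a link reaching a new component $\mathcal{C}'$ raises the benefit by at most $f(H(X_{\mathcal{C}\cup\mathcal{C}'}))-f(H(X_{\mathcal{C}}))<c$ by the strict form of condition~(2); singletons are handled the same way. For the centre $h$, dropping a subset $S\subseteq P$ of its links isolates those agents and changes its payoff by $\big(f(H(X_{\mathcal{C}\setminus S}))-f(H(X_{\mathcal{C}}))\big)+|S|c$, so one needs $f(H(X_{\mathcal{C}}))-f(H(X_{\mathcal{C}\setminus S}))>|S|c$. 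This follows from the single-drop inequalities ($j\in\zeta$) by observing that $g(x)\defeq f(H(X_{\mathcal{C}}))-f(H(X_{\mathcal{C}})-x)$ has $g(0)=0$ and is convex (as $f$ is concave), hence superadditive, while submodularity of entropy gives $H(X_{\mathcal{C}})-H(X_{\mathcal{C}\setminus S})\ge\sum_{j\in S}\big(H(X_{\mathcal{C}})-H(X_{\mathcal{C}\setminus\{j\}})\big)$ by diminishing marginal value; monotonicity and superadditivity of $g$ then give $f(H(X_{\mathcal{C}}))-f(H(X_{\mathcal{C}\setminus S}))\ge\sum_{j\in S}\big(f(H(X_{\mathcal{C}}))-f(H(X_{\mathcal{C}\setminus\{j\}}))\big)>|S|c$, and $S=P$ also dominates the deviation in which $h$ isolates itself. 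The remaining deviations of the centre mix intra-component removals with cross-component additions; these are weakly dominated because the profile is an NE by Theorem~2, and one checks they are \emph{strictly} dominated by combining the inequality above with the strict form of condition~(2).

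The conceptual heart is the rewiring step in (a): it is what turns ``minimally connected tree'' into ``core-sponsored star'' and identifies the hub as the only sponsor. The part demanding the most care is the strict-versus-weak bookkeeping---condition~(2) of Theorem~2 must be invoked in its strict form (and the $\zeta$ condition is exactly the strict form of the relevant branch of condition~(1) for non-singleton components)---together with verifying that \emph{all} deviations, including multi-link and mixed ones, are strictly, not merely weakly, dominated; this is precisely where submodularity of entropy (the diminishing marginal value of information) and concavity of $f$ are indispensable, and it is also where the two-agent component, treated as a degenerate star, has to be handled on its own.
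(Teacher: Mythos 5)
Your proof is correct and, for the necessity direction, follows the same route as the paper: a rewiring argument (delete a sponsored link and re-attach elsewhere in the component at identical cost and benefit) shows that the sponsee of every edge must be a leaf, hence each non-singleton tree component collapses to a core-sponsored star, and the $\zeta$ condition then falls out of the hub's single-link marginal benefit $f(H(X_{\mathcal{C}}))-f(H(X_{\mathcal{C}/\{j\}}))>c$. The paper's version of this rewiring step is terser (it only rewires links incident to a fixed pair $i,j$), whereas yours covers every edge uniformly and nails down the degree argument showing a tree whose every edge has a leaf endpoint is a star. Where you genuinely go beyond the paper is the sufficiency direction: the paper disposes of it in one sentence ("each agent's action is strictly better than any other action"), implicitly treating only single-link deviations, while you correctly observe that the hub can drop several links at once and that the single-drop inequalities do not trivially imply the multi-drop one. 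Your derivation of $f(H(X_{\mathcal{C}}))-f(H(X_{\mathcal{C}\setminus S}))>|S|c$ from submodularity of entropy plus convexity (hence superadditivity) of $g(x)=f(H(X_{\mathcal{C}}))-f(H(X_{\mathcal{C}})-x)$ is sound and fills a real gap in the published argument. Your remark that condition (2) of Theorem 2 must be read in strict form for the "if" direction (else an agent is indifferent to a cross-component link and the equilibrium cannot be strict) is likewise a correct refinement that the paper's statement glosses over. In short: same skeleton, but your writeup is the more rigorous of the two.
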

\begin{proof} See Appendix E. \,\IEEEQEDhere \end{proof}

This Theorem states that for homogeneous link formation costs, each
network component of size $M$ comprises a single agent bearing the
cost of getting connected to $M-1$ other agents. Such networks exhibit
a \textit{core-periphery} structure, i.e. a single agent at the core
is connected to a set of $M-1$ periphery agents. The conditions in
Theorem 3 state that the periphery agents must be \textit{high entropy}
agents. This is because the benefit obtained by connecting to a periphery
agent $j$ at equilibrium must exceed the cost, i.e. $f(H(X_{\mathcal{C}}))-f(H(X_{\mathcal{C}/\{j\}}))>c$.
The intuition behind this condition is as follows. For an agent to
be a periphery agent, it must have both high entropy and low redundancy
with the information possessed by other component members such that
core agents have an incentive to form a link with it. Fig. \ref{fig_pase12}
depicts an exemplary topology of a CIN at strict NE for various link
formation cost ranges. 

In the next subsection, we study the efficiency
of the NE networks and compare the self-organized CINs to those designed
by a network planner.

\begin{figure}[!t]
\centering
\includegraphics[width=3.5in]{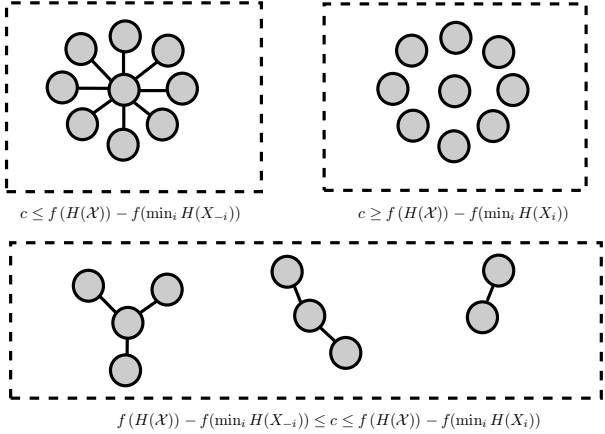}
\captionsetup{font= small}
\caption{The exemplary strict NE topologies for various link cost ranges.}
\label{fig_pase12}
\end{figure}

\subsection{Equilibrium efficiency analysis}

The goal of this subsection is to investigate the equilibrium efficiency of $\mathcal{G}^{N}$ with homogeneous link costs by quantifying the PoA and the MIL. We start by quantifying the PoA of CINs in the following Lemma.

\begin{lem}
For a CIN with homogeneous link costs, the Price-of-Anarchy satisfies
\[
\mbox{PoA}=1,\,\forall\left(\overrightarrow{{\bf H}},c\right)\in\mathcal{K}_{C}\cup\mathcal{K}_{I},
\]
and 
\[
\mbox{PoA} < \frac{Nf(H(\mathcal{X}))}{\sum_{i=1}^{N}f(H(X_{i}))},\,\forall\left(\overrightarrow{{\bf H}},c\right)\in\mathcal{K}_{M}.
\]
\end{lem}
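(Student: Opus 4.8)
The plan is to split along the three connectivity regions of Lemma~1 and, in each, first pin down the equilibrium welfare and then bound the social optimum $\tilde{U}$. In $\mathcal{K}_{C}$, Lemma~1(i) says every NE is a single minimally connected component, so it uses exactly $N-1$ links and every agent ends up with the full information $H(\mathcal{X})$; hence $U(\mathbf{g}^{*})=N f(H(\mathcal{X}))-(N-1)c$ for every $\mathbf{g}^{*}\in\mathbf{G}^{*}$, and this number is also $\min_{\mathbf{g}^{*}}U(\mathbf{g}^{*})$. It remains to show $\tilde{U}$ equals it. A socially optimal profile never carries a redundant link (a link adds cost but no benefit once its endpoints already reach each other), so we may assume $\tilde{\mathbf{g}}$ induces minimally connected components $\mathcal{C}_{1},\dots,\mathcal{C}_{K}$, giving $\tilde{U}=\sum_{k}|\mathcal{C}_{k}|f(H(X_{\mathcal{C}_{k}}))-(N-K)c$. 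Replacing this by a single spanning tree changes the welfare by $\sum_{k}|\mathcal{C}_{k}|\bigl(f(H(\mathcal{X}))-f(H(X_{\mathcal{C}_{k}}))\bigr)-(K-1)c$; choosing an agent $i^{*}$ that attains $\min_{i}H(X_{-i})$, every component not containing $i^{*}$ is a subset of $\mathcal{N}\setminus\{i^{*}\}$, so its bracket is at least $f(H(\mathcal{X}))-f(H(X_{-i^{*}}))=c_{l}\ge c$; since there are $K-1$ such components and each has size at least $1$, the displayed change is nonnegative. Therefore $\tilde{U}=N f(H(\mathcal{X}))-(N-1)c=\min_{\mathbf{g}^{*}}U(\mathbf{g}^{*})$ and $\mbox{PoA}=1$. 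The region $\mathcal{K}_{I}$ is handled symmetrically: by Lemma~1(ii) the unique NE is the empty network, with welfare $\sum_{i}f(H(X_{i}))$, and one then argues that for $c\ge c_{u}$ no socially optimal profile carries a link --- dissolving any component $\mathcal{C}$ into singletons saves $(|\mathcal{C}|-1)c$ while losing only $\sum_{i\in\mathcal{C}}\bigl(f(H(X_{\mathcal{C}}))-f(H(X_{i}))\bigr)$, and concavity of $f$ together with the defining form of $c_{u}$ makes the saving dominate --- so $\tilde{U}=\sum_{i}f(H(X_{i}))=\min_{\mathbf{g}^{*}}U(\mathbf{g}^{*})$, giving $\mbox{PoA}=1$.

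For the mixed region $\mathcal{K}_{M}$ only an inequality is needed, and it comes from two crude bounds. For the denominator: at any NE $\mathbf{g}^{*}$, agent $i$ can deviate to forming no links, which costs nothing and cannot lower its information below $H(X_{i})$, so $u_{i}(\mathbf{g}^{*})\ge f(H(X_{i}))$; summing over $i$ gives $\min_{\mathbf{g}^{*}\in\mathbf{G}^{*}}U(\mathbf{g}^{*})\ge\sum_{i}f(H(X_{i}))$. For the numerator: writing $L$ for the number of links in $\tilde{\mathbf{g}}$, $\tilde{U}=\sum_{i}f\bigl(H(X_{i\cup\mathcal{R}_{i}(\tilde{\mathbf{g}})})\bigr)-Lc\le N f(H(\mathcal{X}))-Lc$. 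If the socially optimal network has $L\ge1$, then $\tilde{U}\le N f(H(\mathcal{X}))-c<N f(H(\mathcal{X}))$, so $\mbox{PoA}=\tilde{U}/\min_{\mathbf{g}^{*}}U(\mathbf{g}^{*})<N f(H(\mathcal{X}))/\sum_{i}f(H(X_{i}))$. If instead $L=0$, then $\tilde{U}=\sum_{i}f(H(X_{i}))$, which forces $\min_{\mathbf{g}^{*}}U(\mathbf{g}^{*})=\sum_{i}f(H(X_{i}))$ as well and $\mbox{PoA}=1$; because $\mathcal{K}_{M}$ is a nondegenerate cost interval (so $H(\mathcal{X})$ strictly exceeds $H(X_{i})$ for at least one $i$), $N f(H(\mathcal{X}))>\sum_{i}f(H(X_{i}))$, and $\mbox{PoA}=1$ is again strictly below the claimed ratio. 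Either way the strict inequality holds.

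The routine part is the $\mathcal{K}_{M}$ estimate; the delicate step is showing that in the boundary regions --- and most subtly in $\mathcal{K}_{I}$ --- the social optimum $\tilde{U}$ coincides exactly with the common equilibrium welfare supplied by Lemma~1. This is a combinatorial optimization over partitions of $\mathcal{N}$ into minimally connected components: one must show that merging all components is weakly beneficial once $c\le c_{l}$ (sketched above) and, dually, that dissolving every component is weakly beneficial once $c\ge c_{u}$. Both directions hinge on turning the concavity of $f$ into comparisons between $f$ evaluated at a joint entropy $H(X_{\mathcal{C}})$ and $f$ evaluated at the individual entropies $H(X_{i})$, $i\in\mathcal{C}$, and on bookkeeping that weighs the number of links merged or saved (``$K-1$'', respectively ``$|\mathcal{C}|-1$'') against the thresholds $c_{l}$ and $c_{u}$. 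Making that bookkeeping line up with exactly these thresholds, rather than with a weaker bound, is where the argument is tightest.
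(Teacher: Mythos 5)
Your handling of $\mathcal{K}_{C}$ and $\mathcal{K}_{M}$ is correct and essentially the paper's argument: in $\mathcal{K}_{C}$ every NE is a spanning tree with welfare $Nf(H(\mathcal{X}))-(N-1)c$, and your all-at-once merge (charging the $K-1$ new links to the $K-1$ components that avoid $\arg\min_{i}H(X_{-i})$) is a cleaner rendition of the paper's ``add links to connect / delete links to kill cycles'' step; the $\mathcal{K}_{M}$ bounds ($u_{i}(\mathbf{g}^{*})\geq f(H(X_{i}))$ from the empty deviation, $\tilde{U}<Nf(H(\mathcal{X}))$) are exactly the paper's.

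The gap is in $\mathcal{K}_{I}$. You claim that dissolving a component $\mathcal{C}$ saves $(|\mathcal{C}|-1)c$ while losing $\sum_{i\in\mathcal{C}}\bigl(f(H(X_{\mathcal{C}}))-f(H(X_{i}))\bigr)$ and that ``the defining form of $c_{u}$ makes the saving dominate.'' It does not: each summand is at most $c_{u}\leq c$, but there are $|\mathcal{C}|$ summands against only $|\mathcal{C}|-1$ saved links, so the inequality you need does not follow from $c\geq c_{u}$. The obstruction is structural rather than a bookkeeping slip: $c_{u}$ is calibrated so that no \emph{single} agent's informational gain justifies a link, whereas the social welfare credits the gain of \emph{every} member of the component. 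Concretely, take $N=2$, $f(x)=x$, $H(X_{1})=H(X_{2})=1$, $H(X_{1},X_{2})=2$, so $c_{u}=1$; at $c=1.5$ the unique NE is the empty network with welfare $2$, yet the one-link network has welfare $2f(2)-c=2.5$, so the social optimum is connected and the ratio is $1.25$. Thus for $c$ in a band just above $c_{u}$ the step cannot be repaired, and $\mbox{PoA}=1$ on all of $\mathcal{K}_{I}$ only holds once $c$ exceeds the larger threshold at which the \emph{aggregate} gain of a component drops below its link cost. For what it is worth, the paper's own proof makes the same leap --- it declares the empty network socially optimal because ``any connection results in a negative payoff for the agent who forms a link,'' which accounts only for the link-former and ignores the recipient's and bystanders' benefits --- so you have not missed an argument the paper actually supplies, but as written your $\mathcal{K}_{I}$ step fails.
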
 
\begin{proof}  
See Appendix F. \, \IEEEQEDhere
\end{proof}
This Lemma shows that all NE networks in the $\mathcal{K}_{C}$ and $\mathcal{K}_{I}$ regions are socially optimal. While in the $\mathcal{K}_{C}$ region multiple equilibria exist, they all have the same social welfare of $N f(H(\mathcal{X})) - (N-1) c$. However, in the $\mathcal{K}_{M}$ region the NE networks may not be socially optimal, and we give an upper bound on the PoA. When all agents possess non-redundant information, the PoA is upper bounded by $N$, whereas when agents possess redundant information, we have $\mbox{PoA} < \frac{N f(\max_{i}H(X_{i}))}{\sum_{i=1}^{N}f(H(X_{i}))} < N$, which gives an indication that information redundancy reduces the PoA in the $\mathcal{K}_{M}$ region\footnote{This is intuitive since when information redundancy increases, the socially optimal welfare decreases, while the welfare of a disconnected network is fixed, which means that the PoA decreases.}. While the social welfare captures the sum utilities, it does not quantify the individual losses by agents. In the next corollary, we quantify the MIL for different connectivity regions.

\begin{crlry} For a CIN with homogeneous link cost, the MIL satisfies
\[
\mbox{MIL}=0,\,\forall\left(\overrightarrow{{\bf H}},c\right)\in\mathcal{K}_{C}\cup\mathcal{K}_{I},
\]
and 
\[
\mbox{MIL}\leq H(\mathcal{X})-\min_{i}H(X_{i}),\,\forall\left(\overrightarrow{{\bf H}},c\right)\in\mathcal{K}_{M}.
\]
\end{crlry}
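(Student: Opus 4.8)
The plan is to split the argument along the three connectivity regions of Lemma~1, using the per-equilibrium topology results of that Lemma for $\mathcal{K}_C$ and $\mathcal{K}_I$ and a single monotonicity-of-entropy estimate for $\mathcal{K}_M$. Throughout, Theorem~1 ensures ${\bf G}^*\neq\phi$, so the supremum and infimum in (\ref{regn1}) are taken over nonempty sets and are well defined. For $\left(\overrightarrow{{\bf H}},c\right)\in\mathcal{K}_C$ I would invoke Lemma~1(i): at \emph{every} NE the network is a single minimally connected component and each agent $i$ gathers $H(X_{i\cup\mathcal{R}_i({\bf g}^*)})=H(\mathcal{X})$. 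Hence for each $i$ both the supremum and the infimum over ${\bf G}^*$ in (\ref{regn1}) equal $H(\mathcal{X})$, the bracketed difference is $0$, and taking the maximum over $i$ gives $\mbox{MIL}=0$. For $\left(\overrightarrow{{\bf H}},c\right)\in\mathcal{K}_I$, Lemma~1(ii) states that the NE is \emph{unique}; then for every $i$ the sup and inf in (\ref{regn1}) range over a singleton set of equilibria and therefore coincide, so again $\mbox{MIL}=0$.

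For $\left(\overrightarrow{{\bf H}},c\right)\in\mathcal{K}_M$, the key observation is purely information-theoretic: fix any agent $i$ and any ${\bf g}^*\in{\bf G}^*$. Since $\{i\}\subseteq i\cup\mathcal{R}_i({\bf g}^*)\subseteq\mathcal{N}$, monotonicity of joint entropy gives
\[
H(X_i)\;\leq\;H(X_{i\cup\mathcal{R}_i({\bf g}^*)})\;\leq\;H(\mathcal{X}).
\]
Consequently $\sup_{{\bf g}_u^*\in{\bf G}^*}H(X_i\cup X_{\mathcal{R}_i({\bf g}_u^*)})\leq H(\mathcal{X})$ and $\inf_{{\bf g}_v^*\in{\bf G}^*}H(X_i\cup X_{\mathcal{R}_i({\bf g}_v^*)})\geq H(X_i)$, so the quantity inside the maximum in (\ref{regn1}) for agent $i$ is at most $H(\mathcal{X})-H(X_i)$. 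Maximizing over $i$ yields $\mbox{MIL}\leq\max_i\left(H(\mathcal{X})-H(X_i)\right)=H(\mathcal{X})-\min_i H(X_i)$, which is the asserted bound.

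There is no hard step here; the one point requiring care is that $\mbox{MIL}$ compares the \emph{worst pair} of equilibria, so for the first two regions one must use the statements of Lemma~1 that hold at \emph{every} NE (single-component connectivity together with full information $H(\mathcal{X})$ at each node in $\mathcal{K}_C$, and uniqueness of the NE in $\mathcal{K}_I$) rather than merely the existence of one efficient equilibrium. The $\mathcal{K}_M$ estimate, by contrast, uses nothing about the game beyond ${\bf G}^*\neq\phi$ and follows immediately from the sandwich $H(X_i)\leq H(X_{i\cup\mathcal{R}_i})\leq H(\mathcal{X})$; no equilibrium characterization (Theorem~2 or~3) is needed for the upper bound.
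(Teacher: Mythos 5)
Your proposal is correct and follows essentially the same region-by-region argument as the paper: sup and inf coincide at every NE in $\mathcal{K}_C$ (all agents obtain $H(\mathcal{X})$) and in $\mathcal{K}_I$ (unique NE), and the $\mathcal{K}_M$ bound comes from sandwiching each agent's gathered information between $H(X_i)$ and $H(\mathcal{X})$. The only cosmetic difference is that the paper phrases the $\mathcal{K}_M$ step via the worst case in which both a connected and a fully disconnected network are equilibria (and records the resulting equality conditions), whereas your monotonicity argument gives the same upper bound without exhibiting extremal equilibria; since the corollary only asserts an inequality, nothing is lost.
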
 \begin{proof} See Appendix G. \, \IEEEQEDhere
\end{proof}

Fig. X depicts the PoA for a 3-agent CIN with the first family of entropic vectors defined in illustrative example 1. It can be seen that the PoA is greater than 1 only in the $\mathcal{K}_{M}$ region. In addition, the PoA decreases as the KL divergence increases, since the value of information in the network decreases, which means that the best equilibrium (connected network) achieves a smaller social welfare while the welfare of the worst equilibrium (fully disconnected network) is independent of the KL divergence. The PoA also decreases as the link cost increases. From Fig. X, we can see that when $\mbox{KL}(\mathcal{X})=4$, the network exhibit an empty $\mathcal{K}_{M}$ region, i.e. the network changes from a connected to a fully disconnected network if the cost exceeds a certain threshold. Thus, for $\mbox{KL}(\mathcal{X})=4$ the network is robust to efficiency loss for all values of link cost as the $\mathcal{K}_{M}$ region is the only region where efficiency loss can occur. Fig. X depicts the MIL upper bound for the same network. It is also observed that the MIL upper bound decreases monotonically with the increasing information redundancy.

\section{Nash Equilibrium Analysis for Heterogeneous Link Costs}

In this section, we extend the analysis done in the previous section
for the game $\mathcal{G}^{N}$, but assuming that the cost of link
formation is exclusively recipient-dependent, i.e. $c_{ij}=c_{j},\,\forall\, i$. It is easy to show that Proposition 1 applies to the case of heterogeneous link costs, i.e. all network components that satisfy the NE conditions are minimally connected.

\subsection{Characterization of the NE for $\mathcal{G}^{N}$}

The following proposition relates the link costs to the connectivity
of the NE networks.

\begin{prp}
\begin{enumerate} [(i)]
\item If $c_{i}<f(H(\mathcal{X}))-f(H(X_{-i})), \forall i \in \mathcal{N}$, then, at every NE (a) the network is minimally connected (the network has one component) and (b) the amount of information possessed by each agent is $H(\mathcal{X})$ (all information is shared). 
\item If $f(H(\mathcal{X}))-f(\min_{j}H(X_{-j}))<\min_{k\in\mathcal{N}/\{i\}}c_{k},$ where $i=\arg\min_{j}H(X_{-j})$, then there is a unique NE which is strict. At this equilibrium, the network is fully disconnected and the amount of information possessed by each agent $i$ is $H(X_{i})$ (no information is shared).
\end{enumerate}
\end{prp}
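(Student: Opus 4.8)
The plan is to follow the template of Lemma~1, replacing its homogeneous-cost deviation arguments by recipient-dependent ones. Two ingredients recur. First, Proposition~1 (which, as noted in the excerpt, carries over to heterogeneous costs): at any NE every component is a tree, so a component $\mathcal{C}$ contains exactly $|\mathcal{C}|-1$ edges, each paid for by exactly one of its two endpoints. Second, the diminishing-returns inequality $f(H(X_{\mathcal{S}\cup\{v\}}))-f(H(X_{\mathcal{S}}))\ge f(H(X_{\mathcal{T}\cup\{v\}}))-f(H(X_{\mathcal{T}}))$ for $\mathcal{S}\subseteq\mathcal{T}\subseteq\mathcal{N}\setminus\{v\}$, which follows from submodularity of joint entropy (conditioning reduces entropy) together with $f$ being increasing, concave and $f(0)=0$; I write $\Delta_{v}(\mathcal{S})$ for the marginal $f(H(X_{\mathcal{S}\cup\{v\}}))-f(H(X_{\mathcal{S}}))$, so this reads $\Delta_{v}(\mathcal{S})\ge\Delta_{v}(\mathcal{T})$, and monotonicity of $f\circ H$ gives $\Delta_{v}(\mathcal{S})\ge 0$.

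For part~(i), suppose toward a contradiction that some NE $\mathbf{g}^{*}$ has components $\mathcal{C}_{1},\dots,\mathcal{C}_{K}$ with $K\ge 2$. Fix $b\in\mathcal{C}_{2}$ and $a\in\mathcal{C}_{1}$; since they lie in different components $g^{*}_{ab}=0$, so $a$ may deviate by adding the single link $g_{ab}=1$. Because $b$ reaches all of $\mathcal{C}_{2}$, after the deviation $a$ reaches $\mathcal{C}_{1}\cup\mathcal{C}_{2}$, so $a$'s payoff changes by $f(H(X_{\mathcal{C}_{1}\cup\mathcal{C}_{2}}))-f(H(X_{\mathcal{C}_{1}}))-c_{b}\ge\Delta_{b}(\mathcal{C}_{1})-c_{b}\ge\Delta_{b}(\mathcal{N}\setminus\{b\})-c_{b}=f(H(\mathcal{X}))-f(H(X_{-b}))-c_{b}>0$, the first inequality by monotonicity, the second by diminishing returns ($\mathcal{C}_{1}\subseteq\mathcal{N}\setminus\{b\}$), and the last by hypothesis. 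This contradicts the NE condition, so every NE is connected; then every agent reaches all of $\mathcal{N}$ and hence possesses $H(\mathcal{X})$, and minimality of the single component is Proposition~1.

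For part~(ii), I would first check that the empty profile $\mathbf{g}^{0}$ is a strict NE: a unilateral deviation by $j$ linking a nonempty $S$ makes $j$ reach exactly $S$, changing $j$'s payoff by $f(H(X_{j\cup S}))-f(H(X_{j}))-\sum_{k\in S}c_{k}$; bounding the gain above by $f(H(\mathcal{X}))-f(H(X_{j}))$ and the cost below by $\min_{k\ne j}c_{k}$, the hypothesis makes this strictly negative for every $j$ and $S$. Uniqueness is the harder half. Suppose $\mathbf{g}^{*}$ is a NE with at least one link; by Proposition~1 it has a non-singleton tree component $\mathcal{C}$, and since a tree on $\ge 2$ vertices has $\ge 2$ leaves I may pick a leaf $a\ne i$ with unique incident edge $(a,b)$. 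If $b$ owns that edge, letting $b$ delete it detaches only $a$ and keeps $\mathcal{C}\setminus\{a\}$ connected, so $b$'s payoff changes by $c_{a}-\bigl(f(H(X_{\mathcal{C}}))-f(H(X_{\mathcal{C}\setminus\{a\}}))\bigr)$, and the NE condition forces $c_{a}\le f(H(X_{\mathcal{C}}))-f(H(X_{\mathcal{C}\setminus\{a\}}))\le f(H(\mathcal{X}))-f(H(X_{-a}))\le f(H(\mathcal{X}))-f(H(X_{-i}))$, contradicting $c_{a}\ge\min_{k\ne i}c_{k}>f(H(\mathcal{X}))-f(H(X_{-i}))$. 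If instead $a$ owns the edge, I would compare $\mathbf{g}^{*}$ with the deviation in which $a$ redirects its single link to the cheapest recipient $\arg\min_{k\in\mathcal{C}\setminus\{a\}}c_{k}$ and combine this with deletion deviations at the remaining leaves and at that cheapest recipient, so as to again force a cost inequality that clashes with the hypothesis. Finally, the strict inequality in the hypothesis makes $\mathbf{g}^{0}$ strict, and uniqueness plus strictness give exactly the stated fully disconnected profile, in which agent $j$ holds only $H(X_{j})$.

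Part~(i) is routine once the diminishing-returns inequality for $f\circ H$ is recorded: a single component-merging deviation does everything. I expect the uniqueness claim in part~(ii) to be the main obstacle, precisely because recipient-dependent costs make the \emph{owner} of each link matter: when a leaf owns its own link to a cheap neighbor, the naive ``delete a leaf'' deviation yields no contradiction, and the work lies in orchestrating several deviations simultaneously (a leaf owner redirecting, a cheap recipient willing to drop a leaf, outside agents declining to merge) while bookkeeping who pays for which edge so that no non-empty topology survives.
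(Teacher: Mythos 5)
Part (i) of your proposal is correct and is essentially the paper's own argument: the paper proves this proposition by pointing back to Lemma~1, and the content of Lemma~1(i)'s proof (``the marginal benefit from connecting to that agent is always more than the link cost irrespective of the current connections of the agent forming the link'') is exactly the submodularity of $f\circ H$ that you state and use; your component-merging deviation makes that one-liner rigorous.

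Part (ii) has genuine gaps. First, your verification that the empty profile is an equilibrium does not follow from the hypothesis. You need $\min_{k\neq j}c_{k}>f(H(\mathcal{X}))-f(H(X_{j}))$ for every $j$, but the hypothesis only lower-bounds $c_{k}$ for $k\neq i$, and only by $f(H(\mathcal{X}))-f(H(X_{-i}))$, which is in general \emph{smaller} than $f(H(\mathcal{X}))-f(H(X_{j}))$ (since $H(X_{-i})\geq H(X_{j})$); worse, for $j\neq i$ the relevant minimum $\min_{k\neq j}c_{k}$ may be attained at $c_{i}$, which the hypothesis does not constrain at all. This is not a presentational issue: with three independent unit-entropy sources, $f$ the identity, $c_{1}$ tiny and $c_{2}=c_{3}$ large, the stated hypothesis holds yet agent $2$ strictly profits from linking to agent $1$ in the empty network. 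So either the step must be replaced by the condition the paper actually uses in the homogeneous Lemma~1(ii) (a bound involving $f(H(X_{j}))$ rather than $f(H(X_{-j}))$, applied to every potential link recipient), or the statement must be read with that correction; as written your deduction ``the hypothesis makes this strictly negative'' is unsupported. Second, in the uniqueness argument your chain $c_{a}\le f(H(X_{\mathcal{C}}))-f(H(X_{\mathcal{C}\setminus\{a\}}))\le f(H(\mathcal{X}))-f(H(X_{-a}))$ has the second inequality backwards: by the very diminishing-returns inequality you recorded, $\Delta_{a}(\mathcal{C}\setminus\{a\})\ge\Delta_{a}(\mathcal{N}\setminus\{a\})$, so no upper bound on $c_{a}$ of the required form is obtained and the contradiction evaporates. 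Third, the case in which the leaf owns its own edge is only a plan (``I would compare \dots so as to again force a cost inequality''), and as you yourself note this is precisely the hard case; the paper's proof of Lemma~1(ii) avoids all of this by bounding the \emph{total} benefit of any deviating link-owner by $f(H(\mathcal{X}))-f(H(X_{j}))$ and comparing it to the cost of a single link, which disposes of every non-empty profile in one stroke. You should adopt that route rather than a leaf-by-leaf surgery.
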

\begin{proof} This can be proven straightforwardly using
the same arguments in the proof of Lemma 1. \, \IEEEQEDhere \end{proof}

This proposition shows that the network topology is highly dependent
on the heterogeneity of the agents as it depends both on the heterogeneous
costs and heterogeneous information of agents. Also the case when
all NE networks are connected corresponds to the $\mathcal{K}_{C}$
region in the homogeneous cost scenario, while the case when the NE
is a fully disconnected network corresponds to the $\mathcal{K}_{I}$
region. An appropriate definition for the connectivity regions for
the heterogeneous cost case is given by (\ref{Moddef1}), (\ref{Moddef2}),
and (\ref{Moddef3}). 
\begin{figure*}[!t]
\setcounter{mytempeqncnt}{\value{equation}} \setcounter{equation}{8}
\begin{equation}
\mathcal{K_{C}}=\left\{ \left(\overrightarrow{{\bf H}},{\bf c}=(c_{1},c_{2},...,c_{N})\right)\left|\,{\bf c}\in\mathbb{R}_{N}^{+},\overrightarrow{{\bf H}}\in\Gamma_{N}^{*},\,\mbox{and}\, c_{i}<f(H(\mathcal{X}))-f(H(X_{-i})),i=\arg\min_{j}H(X_{-j})\right.\right\} .\label{Moddef1}
\end{equation}
\begin{equation}
\mathcal{K}_{I}=\left\{ \left(\overrightarrow{{\bf H}},{\bf c}=(c_{1},c_{2},...,c_{N})\right)\left|\,{\bf c}\in\mathbb{R}_{N}^{+},\overrightarrow{{\bf H}}\in\Gamma_{N}^{*},\,\mbox{and}\, f(H(\mathcal{X}))-f(\min_{j}H(X_{-j}))<\min_{k\in\mathcal{N}/\{i\}}c_{k},i=\arg\min_{j}H(X_{-j})\right.\right\} .\label{Moddef2}
\end{equation}
\begin{equation}
\mathcal{K}_{M}=\left\{ \left(\overrightarrow{{\bf H}},{\bf c}=(c_{1},c_{2},...,c_{N})\right)\left|\forall\left(\overrightarrow{{\bf H}},{\bf c}\right)\notin\mathcal{K}_{C}\cup\mathcal{K_{I}},{\bf c}\in\mathbb{R}_{N}^{+},\overrightarrow{{\bf H}}\in\Gamma_{N}^{*}\right.\right\} .\label{Moddef3}
\end{equation}
\setcounter{equation}{\value{mytempeqncnt}+3} \hrulefill{}\vspace*{4pt}
 
\end{figure*}

In the following Theorem, we give a generic characterization for this class of networks in NE.

\begin{thm}
A network in which the components are precisely $\{\mathcal{C}_{1},\mathcal{C}_{2},...,\mathcal{C}_{K}\}$
can be supported in a NE if and only if the following relationships between the cost and the value of information are satisfied 
\begin{enumerate}
\item $f(H(X_{\mathcal{C}_{i}\cup\mathcal{C}_{j}}))-f(H(X_{\mathcal{C}_{i}}))\geq\min_{k\in\mathcal{C}_{j}}c_{k},\,\forall i,j\in\{1,2,...,K\}$. 
\item $f(H(X_{\mathcal{C}_{i}}))\geq\min\{f(H(X_{\mathcal{C}_{i}/\{j\}}))+c_{j},f(H(X_{j}))+\min_{k\in\mathcal{C}_{i}/\{j\}}c_{k}\}$,
$\forall i\in\{1,2,...,K\},j\in\mathcal{C}_{i}$. 
\end{enumerate}
\end{thm}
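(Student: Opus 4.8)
The plan is to mirror the proof of Theorem~2 (Appendix~D), replacing the uniform cost $c$ by the recipient-dependent cost $c_j$ wherever a link into agent $j$ is priced. As already noted in the text, Proposition~1 extends verbatim to heterogeneous costs, so in any NE each component is a tree (minimally connected). Moreover no edge of such a tree is sponsored from both endpoints, since the endpoint paying for a redundant copy could delete it, keep exactly the same reachable set, and strictly save a positive cost; hence every NE fixes, for each component, an orientation of its spanning tree recording which endpoint pays for each edge, and all the deviations below are read off this orientation.

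For necessity I would treat the two conditions separately. Condition~1 encodes that no agent can profitably bridge two components: since $a\in\mathcal{C}_i$ already reaches all of $\mathcal{C}_i$, the cheapest link it could draw into $\mathcal{C}_j$ costs $\min_{k\in\mathcal{C}_j}c_k$ and changes its benefit from $f(H(X_{\mathcal{C}_i}))$ to $f(H(X_{\mathcal{C}_i\cup\mathcal{C}_j}))$, so requiring this deviation to be unprofitable for every ordered pair $(i,j)$ of distinct components gives the cross-component inequality. For Condition~2, fix $j\in\mathcal{C}_i$. If $j$ sponsors at least one link, compare with the deviation in which $j$ deletes every link it sponsors: the residual reachable set contains $j$, so by monotonicity of $f$ its value is at least $f(H(X_j))$, and unprofitability forces $f(H(X_{\mathcal{C}_i}))-f(H(X_j))$ to be at least the total cost $j$ was paying, hence at least $\min_{k\in\mathcal{C}_i\setminus\{j\}}c_k$ --- the second disjunct. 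If instead $j$ sponsors nothing, every incident edge is sponsored by a neighbour; deleting the link to $j$ severs a subtree containing $j$, and combining this with that neighbour's rewiring options (dropping the link to $j$ and re-attaching to a cheaper agent on the severed side) yields the first disjunct $f(H(X_{\mathcal{C}_i}))-f(H(X_{\mathcal{C}_i\setminus\{j\}}))\ge c_j$ --- the one delicate point, discussed below. Thus one disjunct of Condition~2 always holds.

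For sufficiency I would construct an explicit NE realizing $\{\mathcal{C}_1,\dots,\mathcal{C}_K\}$. For each $\mathcal{C}_i$ with $|\mathcal{C}_i|\ge2$, let $k_i^{\star}$ be a cheapest agent in $\mathcal{C}_i$ and form the star centred at $k_i^{\star}$. By Condition~2 each spoke $j$ satisfies one of the disjuncts: when $f(H(X_{\mathcal{C}_i}))-f(H(X_j))\ge\min_{k\neq j}c_k$ let $j$ sponsor its link to $k_i^{\star}$, and otherwise (so the first disjunct must hold) let $k_i^{\star}$ sponsor the link to $j$. Singletons are left isolated and no cross-component link is drawn, so the components are exactly the $\mathcal{C}_i$. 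To verify NE: a spoke cannot reconnect to its component more cheaply than through the globally cheapest hub and cannot profitably isolate itself, precisely because its assigned disjunct holds; the hub cannot profitably delete any subset $S$ of the spokes it sponsors, because each such spoke $j$ has $f(H(X_{\mathcal{C}_i}))-f(H(X_{\mathcal{C}_i\setminus\{j\}}))\ge c_j$ and the set function $S\mapsto f(H(X_S))$ is monotone and submodular (a concave increasing $f$ composed with the submodular joint entropy), so marginal values only grow as spokes are removed and deleting $S$ forfeits at least $\sum_{j\in S}c_j$; and Condition~1 forbids every cross-component addition. A singleton agent has no link to delete and no profitable link to add, again by Condition~1, so it too is in equilibrium.

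The main obstacle is the necessity of Condition~2, and within it the sub-case of an internal agent $j$ that sponsors no link: deleting a single incident edge then severs strictly more than $j$ alone, so no one deviation directly isolates the marginal value of $j$, and one must trace the sponsorship orientation along the tree and invoke the rewiring deviations of the sponsoring neighbours --- which force the cheapest agents of the component to lie on $j$'s side --- before the inequality $f(H(X_{\mathcal{C}_i}))-f(H(X_{\mathcal{C}_i\setminus\{j\}}))\ge c_j$ can be extracted. Once the star construction and the submodularity of $S\mapsto f(H(X_S))$ are in hand, the sufficiency direction is routine.
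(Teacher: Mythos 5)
Your overall plan matches the paper's (the paper gives no actual proof of Theorem~4, only the remark that it follows Theorem~2), and your sufficiency construction --- the star centred at the cheapest agent of each component, sponsorship assigned by which disjunct of Condition~2 holds, and submodularity of $S\mapsto f(H(X_S))$ to rule out multi-link deletions by the hub --- is sound and considerably more explicit than anything the paper provides. You also derive the cross-component requirement in the form ``marginal benefit $\leq$ cheapest cost into $\mathcal{C}_j$,'' which is the correct NE condition; note that the theorem statement writes $\geq$, which must be a sign typo, since with $\geq$ an agent would strictly profit from bridging.

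The genuine gap is exactly the point you flagged, and it cannot be closed, because the claim is false there. For an internal agent $j$ that sponsors no links, the deviations available to a sponsoring neighbour $w$ yield only (a) $f(H(X_{\mathcal{C}_i}))-f(H(X_{A_w}))\geq c_j$, where $A_w$ is the piece containing $w$ after cutting the edge $wj$ --- a set strictly contained in $\mathcal{C}_i\setminus\{j\}$, so by monotonicity this is \emph{weaker} than the first disjunct, not stronger --- and (b) the rewiring condition $c_j\leq c_v$ for agents $v$ on $j$'s side of the cut. Neither implies either disjunct of Condition~2. Concretely: let $X_1,X_3$ be i.i.d.\ fair bits, $X_2=X_1\oplus X_3$, $f$ the identity, $c_2=0.5$, $c_1=c_3=10$, and take the line $1\to 2\leftarrow 3$ with agents $1$ and $3$ each sponsoring their link to $2$. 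This is an NE: each endpoint gains $f(H(X_{1,2,3}))-f(H(X_i))=1>0.5$ from its link, cannot rewire more cheaply, and agent $2$ has nothing to delete and nothing profitable to add. Yet for $j=2$ both disjuncts fail: $f(H(X_{1,2,3}))-f(H(X_{1,3}))=0<c_2$ and $f(H(X_{1,2,3}))-f(H(X_2))=1<\min(c_1,c_3)$. The reason no amount of tracing the sponsorship orientation rescues the argument is that an agent linking \emph{through} $j$ pays for access to everything reachable via $j$, not for $j$'s own marginal information given the rest of the component; with redundant (here, complementary) information these two quantities can differ arbitrarily. So the ``only if'' direction of the theorem as stated is false --- a defect inherited from the paper (an analogous example with $H(X_2)>H(X_1)$ breaks the necessity of Theorem~2's condition~(1) as well) --- but your proposed necessity proof does not go through as written.
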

\begin{proof} This can be proven following the same idea for the
proof of Theorem 2. \, \IEEEQEDhere \end{proof}

Note that unlike the homogeneous cost scenario, we cannot characterize and plot the connectivity versus a single value for link cost since the link cost is now a multidimensional parameter. In the next subsection, we analyze the efficiency of the NE networks.

\subsection{Equilibrium Efficiency Analysis}

In this subsection, we quantify the impact of the link costs heterogeneity on the network efficiency. Unlike the case of the homogeneous link costs, we show that information redundancy induces costly anarchy in the $\mathcal{K}_{C}$ region when the link costs are recipient-dependent. In the following Lemma, we quantify the PoA for the $\mathcal{K}_{C}$ and $\mathcal{K}_{I}$ regions.
 
\begin{lem}
For a CIN with heterogeneous link costs, the PoA satisfies 
\[
\mbox{PoA}=\left\{ \begin{array}{lr}
1, & :\forall\left(\overrightarrow{{\bf H}},{\bf c}\right)\in\mathcal{K}_{I}\\
\frac{Nf(H(\mathcal{X}))-(N-1)\min_{k}c_{k}}{Nf(H(\mathcal{X}))-\sum_{j=1}^{N}c_{j}+\min_{k}c_{k}}, & :\forall\left(\overrightarrow{{\bf H}},{\bf c}\right)\in\mathcal{K}_{C}
\end{array}\right.
\]
and 
\[
\mbox{PoA} < \frac{Nf(H(\mathcal{X}))}{\sum_{i=1}^{N}f\left(H(X_{i})\right)}:\forall\left(\overrightarrow{{\bf H}},{\bf c}\right)\in\mathcal{K}_{M}.
\] 
\end{lem}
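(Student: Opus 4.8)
The plan is to handle the three regions separately, leveraging the topology characterizations already established. For the $\mathcal{K}_I$ region: by Proposition 3(ii), the unique NE is the fully disconnected network, and I must argue this is also the socially optimal network. Since $\left(\overrightarrow{{\bf H}},{\bf c}\right)\in\mathcal{K}_I$ means $f(H(\mathcal{X}))-f(\min_j H(X_{-j}))<\min_{k\in\mathcal{N}/\{i\}}c_k$ with $i=\arg\min_j H(X_{-j})$, I would show that adding any link to the empty network decreases social welfare: the marginal informational gain to the network from any single link is at most $f(H(\mathcal{X}))-f(\min_j H(X_{-j}))$ (bounding the best-case incremental benefit by the most-valuable possible link), which is strictly less than the cost $c_k$ of that link. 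A short induction on the number of links, or a direct argument that in any strategy profile one can profitably delete the "least useful" link, then shows the empty network maximizes $U(\cdot)$, so $\tilde U=U(\mathbf{g}^*)$ and $\mathrm{PoA}=1$.

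For the $\mathcal{K}_C$ region: by Proposition 3(i) every NE is minimally connected (a single component, a spanning tree by Proposition 1 applied to the heterogeneous case), and every agent possesses $H(\mathcal{X})$. Thus every NE has the same total informational benefit $Nf(H(\mathcal{X}))$, and the social welfare differs only in the linking cost. A spanning tree on $N$ nodes has exactly $N-1$ links; the cost paid in a given NE is $\sum_{(i,j)\in\mathcal{T}}c_j$ where the sum runs over the $N-1$ links with $j$ the recipient. The best equilibrium (largest $U$) is the one that minimizes total link cost; among spanning trees where each link costs $c_j$ for recipient $j$, the minimum achievable is $\sum_{j=1}^N c_j - \max_k c_k$ — route the tree so the single most expensive recipient is never paid for, i.e. make that agent the unique "core" sponsoring its own links — hence $\tilde U = Nf(H(\mathcal{X})) - \sum_{j=1}^N c_j + \max_k c_k$; wait, I should double-check against the stated formula, which has $\min_k c_k$ in both places. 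Re-reading: the numerator $Nf(H(\mathcal{X}))-(N-1)\min_k c_k$ corresponds to the best-case welfare being a star \emph{centered so that the cheapest recipient is paid $N-1$ times}, i.e. $\tilde{\mathbf g}$ is a star whose periphery is all paid at the minimum rate; the denominator is the worst-case NE welfare $Nf(H(\mathcal{X}))-\big(\sum_j c_j-\min_k c_k\big)$, a tree paying every recipient once except sparing one occurrence of the cheapest. So the step is: (a) identify $\tilde{\mathbf g}$ as the star minimizing $\sum c_j$ over the $N-1$ edges, giving cost $(N-1)\min_k c_k$; (b) identify the worst NE as a path-like tree paying $\sum_j c_j - \min_k c_k$; (c) verify both are admissible (the star is an NE whenever we are in $\mathcal{K}_C$, and the worst tree still satisfies the Theorem 4 conditions in $\mathcal{K}_C$); (d) take the ratio.

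For the $\mathcal{K}_M$ region: by definition this is the complement of $\mathcal{K}_C\cup\mathcal{K}_I$, so multiple topologies with different connectedness may be NE. The worst NE welfare is bounded below by the fully disconnected network's welfare $\sum_{i=1}^N f(H(X_i))$ whenever the empty network is itself an NE, or more carefully by noting any NE has welfare at least $\sum_i f(H(X_i)) - (\text{links})\cdot 0$; actually the cleanest bound is $\min_{\mathbf g^*}U(\mathbf g^*) \ge \sum_{i=1}^N f(H(X_i))$ using that each agent's utility at \emph{any} NE is at least $f(H(X_i))$ (an agent can always unilaterally delete all its links, an option included in its strategy set, guaranteeing utility $\ge f(H(X_i))$; hence $u_i(\mathbf g^*)\ge f(H(X_i))$ at every NE). For the numerator, $\tilde U \le Nf(H(\mathcal{X}))$ since no agent can possess more than $H(\mathcal{X})$ and costs are nonnegative. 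Dividing gives the strict inequality $\mathrm{PoA} < \frac{Nf(H(\mathcal{X}))}{\sum_i f(H(X_i))}$, strict because the numerator bound cannot be attained simultaneously with zero cost when $N\ge 2$ and the network must be connected to give everyone $H(\mathcal{X})$.

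The main obstacle is the $\mathcal{K}_C$ computation: I need to be careful that the specific trees achieving the best and worst welfare are genuinely Nash equilibria under the conditions of Theorem 4 restricted to $\mathcal{K}_C$ (not merely feasible graphs), and to correctly track which recipient costs are paid under the unilateral-sponsorship convention. Verifying that a star centered at an arbitrary node — in particular at the cheapest recipient — survives as an NE in the whole $\mathcal{K}_C$ region (so that $\tilde{\mathbf g}$ is attainable as claimed) is the delicate point; this should follow from condition (i) of Proposition 3 guaranteeing $c_i < f(H(\mathcal{X}))-f(H(X_{-i}))$ at the relevant minimizing index, combined with Theorem 4's component conditions, but it requires checking the inequalities rather than just citing them.
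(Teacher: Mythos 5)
Your plan is correct and follows essentially the same route as the paper's proof: for $\mathcal{K}_C$ the best (socially optimal) NE is the periphery-sponsored star centered at the agent with minimum link cost, costing $(N-1)\min_k c_k$, and the worst NE is a line/path-like tree in which every agent except the cheapest is paid for exactly once, costing $\sum_j c_j - \min_k c_k$; for $\mathcal{K}_I$ the unique disconnected NE is socially optimal; and for $\mathcal{K}_M$ one bounds the optimum by $Nf(H(\mathcal{X}))$ and the worst NE welfare below by $\sum_i f(H(X_i))$ (your observation that each agent can guarantee $f(H(X_i))$ by deleting all its own links is in fact a slightly cleaner justification of that lower bound than the paper's). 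The "delicate point" you flag — verifying that the extremal star and path are genuine equilibria and that no NE tree pays more than the path — is exactly the verification the paper carries out, using the $\mathcal{K}_C$ cost condition and the observation that an agent sponsoring a link to an expensive recipient could otherwise reroute through a cheaper one.
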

\begin{proof} See Appendix H. \, \IEEEQEDhere \end{proof}

Thus, unlike in the homogeneous cost scenario, not all NE networks in the $\mathcal{K}_{C}$ region are socially optimal. In fact, any NE network other than a \textit{periphery-sponsored} star with the agent having the lowest link cost residing in the core, is not socially optimal. How does information redundancy affect the PoA in such networks? The following Theorem answers this question.
\begin{thm}
For a CIN with recipient-dependent link costs in the $\mathcal{K}_{C}$ region and for fixed values of the individual agents' entropies, the Price-of-Anarchy is a monotonically increasing function of the total information redundancy. 
\end{thm}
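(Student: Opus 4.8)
The plan is to reduce the statement to a one-variable monotonicity check, starting from the closed form for the Price of Anarchy in the $\mathcal{K}_{C}$ region already obtained in Lemma 3,
\[
\mbox{PoA}=\frac{Nf(H(\mathcal{X}))-(N-1)\min_{k}c_{k}}{Nf(H(\mathcal{X}))-\sum_{j=1}^{N}c_{j}+\min_{k}c_{k}}.
\]
The key observation is that, for fixed linking costs, the entropic vector enters this expression only through the single scalar $H(\mathcal{X})$. From (\ref{KL22}) we have $H(\mathcal{X})=\sum_{i=1}^{N}H(X_{i})-\mbox{KL}(\mathcal{X})$, so with the individual entropies $H(X_{1}),\dots,H(X_{N})$ held fixed, $H(\mathcal{X})$ is a strictly decreasing affine function of the total redundancy $\mbox{KL}(\mathcal{X})$; since $f$ is increasing, $A\defeq f(H(\mathcal{X}))$ is then a decreasing function of $\mbox{KL}(\mathcal{X})$.

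Next I would view the PoA as a function $\phi(A)$ of the single variable $A$. Writing $S=\sum_{j=1}^{N}c_{j}$ and $m=\min_{k}c_{k}$, we have $\phi(A)=\big(NA-(N-1)m\big)/\big(NA-S+m\big)$, where the denominator is precisely the social welfare of the worst equilibrium and is strictly positive (at any NE each agent can secure at least $f(H(X_{i}))\ge 0$ by severing all of its links, so the summed welfare is positive). A direct application of the quotient rule gives
\[
\phi'(A)=\frac{N\,(Nm-S)}{(NA-S+m)^{2}}.
\]
Since $S=\sum_{j}c_{j}\ge N\min_{k}c_{k}=Nm$, the numerator satisfies $N(Nm-S)\le 0$, so $\phi$ is non-increasing in $A$, and strictly decreasing whenever the costs are not all equal (the genuinely heterogeneous case, consistent with $\mbox{PoA}\equiv 1$ when $S=Nm$). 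Composing the decreasing map $\mbox{KL}(\mathcal{X})\mapsto A$ with the non-increasing map $A\mapsto\phi(A)$ shows that $\mbox{PoA}$ is a monotonically increasing function of $\mbox{KL}(\mathcal{X})$, which is the assertion.

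Honestly there is little deep difficulty here: the argument is a sign computation resting on Lemma 3 plus the elementary inequality $\sum_{j}c_{j}\ge N\min_{k}c_{k}$. The two points that warrant explicit care are (i) confirming that the denominator $NA-S+m$ stays positive, so that the quotient-rule sign analysis is legitimate, which I would justify by the no-link deviation bound just mentioned; and (ii) the bookkeeping that, as $\mbox{KL}(\mathcal{X})$ is increased with the individual entropies fixed, the pair $\left(\overrightarrow{{\bf H}},{\bf c}\right)$ must remain in $\mathcal{K}_{C}$ — decreasing $H(\mathcal{X})$ only tightens the defining inequality $c_{i}<f(H(\mathcal{X}))-f(H(X_{-i}))$ for $i=\arg\min_{j}H(X_{-j})$, so the statement is naturally read pointwise over the portion of $\mathcal{K}_{C}$ with the prescribed individual entropies, which is exactly where the Lemma 3 formula applies.
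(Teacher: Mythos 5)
Your proof is correct and follows essentially the same route as the paper: substitute $H(\mathcal{X})=\sum_{i}H(X_{i})-\mbox{KL}(\mathcal{X})$ into the Lemma~3 formula for the PoA in $\mathcal{K}_{C}$ and check monotonicity in $\mbox{KL}(\mathcal{X})$. The only difference is that the paper dismisses the monotonicity step as ``easily shown,'' whereas you supply the explicit quotient-rule computation, the inequality $\sum_{j}c_{j}\geq N\min_{k}c_{k}$, and the two domain/positivity caveats — all of which are sound and strengthen the argument.
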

\begin{proof} See Appendix I. \, \IEEEQEDhere \end{proof}

Thus, in stark contrast with the results obtained for the homogeneous cost CINs, Theorem 5 states that information redundancy induces costly anarchy for a network in $\mathcal{K}_{C}$ region. This results from the heterogeneity of the link formation costs, which promotes anarchy in the network as agents are no longer indifferent to the links they form as in the homogeneous cost scenario. As a matter of fact, some agents may end up forming ``expensive" links and getting the same amount of information that they could have gathered by forming a ``cheaper" link. When information redundancy increases, the value of the information gathered by agents decreases, thus, anarchy costs more and the PoA increases. Contrarily, in the $\mathcal{K}_{M}$ region, the upper bound on PoA decreases as the information redundancy increases in a similar manner to the homogenous link costs scenario. Unlike the PoA, the MIL upper bound is not sensitive to cost heterogeneity since it is only sensitive to informational losses. It can be easily shown that the MIL in recipient-dependent CINs behaves in the same way as in the homogeneous cost scenario. In the next section, we tackle the problem of joint information production and link formation in CINs.

\section{Joint Information Production and Link Formation Games in CINs} 

In the network formation game so far, we have assumed that agents in a CIN are gifted with an exogenously determined entropic vector. Nevertheless, in many practical CINs, agents decide the amount of information to ``produce" given some production cost, e.g. mobile users in cellular systems may download data for social-based services by themselves via the cellular network infrastructure, or get this data opportunistically from other users by establishing D2D links \cite{refnew1}. In this section, we focus on a CIN where each agent jointly decides the amount of information to produce and the links to form.

\subsection{Game formulation}

When agents choose what information to produce, a crucial aspect that affects the network topology and information production is how information aggregates. \cite{ref19} assumes that information aggregates simply by addition; this will be the case only if the value of each additional piece of information is constant; thus, there are no complementarities nor redundancies. \cite{ref14} assumes a specific functional form, the \textit{Dixit-Stiglitz} function; this captures informational complementarities and redundancies in a very special way, i.e. agents appreciate ``diversity of information sources" rather than the ``diversity of the information". In this paper, we consider two modes of aggregation that seem more natural and are suggested by the formulation of information in terms of entropy.

The information production decision taken by $N$ agents in a CIN
corresponds to the selection of a point inside the entropic region
$\Gamma_{N}^{*}$. Correlations between the random variables of different
agents are exogenously determined by external factors, e.g. geographical
locations of sensors. To capture information redundancy, we define
an aggregation function $F_{\mathcal{H}}:\mathbb{R}_{+}^{N}\rightarrow\mathbb{R},$
that maps the entropies of a set of agents to a joint entropy of these
agents, i.e. $H(X_{1},X_{2},...,X_{N})=F_{\mathcal{H}}\left(H(X_{1}),H(X_{2}),...,H(X_{N})\right)$.
Clearly, the range of the function $F_{\mathcal{H}}(.)$ should belong to $\Gamma_{N}^{*}$. Throughout this section, we study two
different aggregation functions: the first is the one corresponding
to independent random variables $H(X_{1},X_{2},...,X_{N})=\sum_{i=1}^{N}H(X_{i})$,
and the second is the one corresponding to strongly correlated random
variables $H(X_{1},X_{2},...,X_{N})=\max\{H(X_{1}),H(X_{2}),...,H(X_{N})\}$.
Both aggregation functions provide insights on how information redundancy
affects the information production decisions at equilibrium. 

In real-world networks, the aggregation function captures the informational
relationships between different agents in a CIN. For instance, in
a sensor network where sensors are deployed over a correlated random
field \cite{ref18}, the information production decision can be thought
of as the precision at which a sensor quantizes its measurements.
Larger precision corresponds to larger value for the entropy. However,
no matter what precision a sensor uses, its measurements will be correlated
to that of another nearby sensor. Thus, the joint entropy of the two
sensors would be governed not only by the precision they decide, but
also by the redundancy in their information that is determined exogenously
by their geographical locations and the nature of the physical process
that they sense. The aggregation function captures such exogenous
factors, and based on it, the behavior of cognitive agents is determined.

In the information production and link formation game, the strategy
of an agent $i$ is denoted by ${\bf s}_{i}=\left(H(X_{i}),{\bf g}_{i}\right)$.
A strategy profile of the game is written as ${\bf s}=(H(X_{1}),H(X_{2}),...,H(X_{N}),{\bf g})$,
and the strategy space is ${\bf S}$. We denote the joint information
production and link formation game by $\bar{\mathcal{G}}^{N}=\langle\mathcal{N},{\bf S},{\bf u}\rangle$.
Thus, different from $\mathcal{G}^{N}$, agents do not observe an
entropic vector, but they decide the entropic vector based on their knowledge
of the aggregation function.

The utility function of agent $i$ is given by 
\begin{equation}
u_{i}({\bf s})=f\left(H(X_{i\cup\mathcal{R}_{i}({\bf g})})\right)-kH(X_{i})-|\mathcal{N}_{i}({\bf g})|c,\label{prodeq}
\end{equation}
where $k$ is the cost of producing one unit of information, $|\mathcal{N}_{i}({\bf g})|$
is the number of agents which agent $i$ form links with, and $H(X_{i\cup\mathcal{R}_{i}({\bf g})})$
is determined by $F_{\mathcal{H}}$ given the production levels of
all agents. We adopt the NE as a solution concept. Thus, a strategy
profile ${\bf s}^{*}$ is an NE profile if no agent benefits from
unilaterally forming a link, breaking a link, or altering the amount
of information it produces. The set of NE profiles is denoted by ${\bf S}^{*}$.
Finally, we denote by $\bar{H}$ the maximum amount of information that
each agent can produce at equilibrium, thus $\bar{H}$
can be obtained by solving $f^{'}(\bar{H})=k$ \cite{ref19}. In the following
subsection, we revisit the motivating example of the two agents interaction
in order to understand the cognitive behavior of agents in $\bar{\mathcal{G}}^{2}$.

\begin{figure}[!t]
\centering
\includegraphics[width=3in]{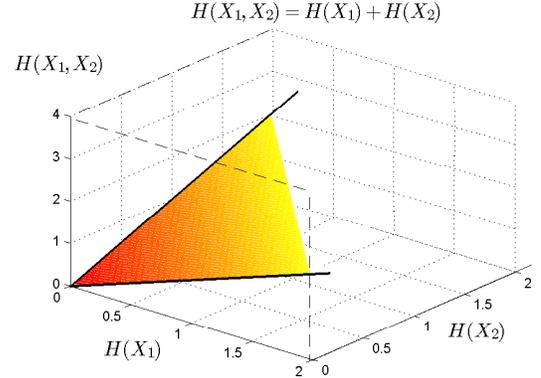}
\captionsetup{font= small}
\caption{The aggregation function for independent random variables.}
\label{fig_pase215}
\end{figure}

\begin{figure}[!t]
\centering
\includegraphics[width=3in]{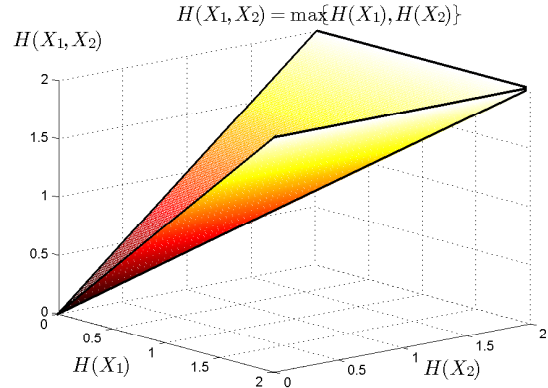}
\captionsetup{font= small}
\caption{The aggregation function for strongly correlated random variables.}
\label{fig_pase315}
\end{figure}

\subsection{Motivating example for two-agents interaction: To produce or not
to produce?}

Consider a simple CIN with only two agents ($N=2$) who are playing
the game $\bar{\mathcal{G}}^{2}$. We aim at characterizing the equilibria
of $\bar{\mathcal{G}}^{2}=\langle\{1,2\},{\bf S},{\bf u}\rangle$,
and investigate the impact of $F_{\mathcal{H}}$, $k$, and $c$ on
the cognitive behavior of the agents. Specifically, we are interested
in identifying scenarios in which one agent may decide not to produce
any information and fully rely on the other. Let us focus on agent
1. The utility function of this agent is given by 
\[
u_{1}({\bf s})=f\left(H(X_{1\cup\mathcal{R}_{1}({\bf g})})\right)-kH(X_{1})-g_{12}c,
\]
where $\mathcal{R}_{1}({\bf g})=\phi$ if $g_{12}=g_{21}=0$, and
$\mathcal{R}_{1}({\bf g})=2$ otherwise. The best response of agent
1 is given by 
\[
u_{1}({\bf s}^{*})=\max_{g_{12},H(X_{1})}\left(f\left(H(X_{1\cup\mathcal{R}_{1}({\bf g})})\right)-kH(X_{1})-g_{12}c\right).
\]
Note that the decision of agent 1 depends on the value of $H(X_{1\cup\mathcal{R}_{1}({\bf g})})$,
which is determined by $F_{\mathcal{H}}$. For 2 agents, the entropic
vector is $\overrightarrow{{\bf H}}=[H(X_{1}),H(X_{2}),H(X_{1},X_{2})]$.
The function $F_{\mathcal{H}}$ maps the information production decisions
$H(X_{1})$ and $H(X_{2})$ to $H(X_{1},X_{2})$. Thus, we have $H(X_{1},X_{2})=F_{\mathcal{H}}\left(H(X_{1}),H(X_{2})\right)$.
In the following, we focus on two different aggregation functions $F_{\mathcal{H}}\left(H(X_{1}),H(X_{2})\right)=H(X_{1})+H(X_{2})$
and $F_{\mathcal{H}}\left(H(X_{1}),H(X_{2})\right)=\max\{H(X_{1}),H(X_{2})\}$. 
\subsubsection{${\bf F_{\mathcal{H}}\left(H(X_{1}),H(X_{2})\right)=H(X_{1})+H(X_{2})}$}
 In this case, the information of agents 1 and 2 are not redundant,
which means that the random variables $X_{1}$ and $X_{2}$ are independent.
Thus, $F_{\mathcal{H}}$ maps the production profile of both agents
to a point in the set $\tilde{\mathcal{H}}$. This reduces to the
aggregation function used in \cite{ref19}. Fig. \ref{fig_pase215}
plots $F_{\mathcal{H}}$, which corresponds to the upper surface of the convex cone $\Gamma_{2}^{*}$ (or equivalently, the hyperplane $\tilde{\mathcal{H}}$).
Assume that the link cost is given by $c>k\bar{H}$. In this case,
we have a unique equilibrium in which $g_{12}^{*}=g_{21}^{*}=0$, 
and $H^{*}(X_{1})=H^{*}(X_{2})=\bar{H}$. Thus, we have a fully disconnected
network with both agents producing information. This means that when
the link cost is very high, every agent decides to produce information
and not to get information from the other. Now assume that $c<k\bar{H}$.
It is easy to show that $g_{12}^{*}g_{21}^{*}=0$, $g_{12}^{*}=1$
or $g_{21}^{*}=1$, and $H^{*}(X_{1})+H^{*}(X_{2})=\bar{H}$. Thus,
when the link cost is low, agents generally produce some of the information
they need and get some other information from the other agent. However,
one possible equilibrium has one agent producing an amount $\bar{H}$
of information with the other forming a link with it and not producing any information on its own.
\subsubsection{${\bf F_{\mathcal{H}}\left(H(X_{1}),H(X_{2})\right)=\max\{H(X_{1}),H(X_{2})\}}$}
Agents may possess fully correlated information in which the joint
entropy is always bounded by the entropy of one of them. Fig. \ref{fig_pase315}
plots $F_{\mathcal{H}}$ which corresponds to the lower surface of
the convex cone $\Gamma_{2}^{*}$. In this case, it is never beneficial
for any agent to form a link and produce a positive amount of information
simultaneously. For $c>k\bar{H}$, we have a unique equilibrium comprising
a fully disconnected network with each agent producing $\bar{H}$.
For $c<k\bar{H}$, we have only one agent producing positive amount
of information in every equilibrium. 

Thus, information redundancy influences the agents' information
production decisions. When the information contains no redundancies, there exist
many equilibria in which both agents produce positive amount of information
when $c<k\bar{H}$. However, for $c<k\bar{H}$, when agents have strongly
correlated information, every equilibrium has only one agent producing
information. Thus, redundancy discourages information sharing between
agents and reduces the number of agents producing information when
the link cost is low. When $c>k\bar{H}$, we always have a disconnected
network with all agents producing information for both aggregation
functions. However, the total amount of information in the network
when the random variables of both agents are independent is $H(X_{1},X_{2})=2\bar{H}$,
while when the information of both agents are fully correlated (i.e.,
$H(X_{1},X_{2})=\max\{H(X_{1}),H(X_{2})\}$), we have $H(X_{1},X_{2})=\bar{H}$.
In the next subsection, we generalize these results to the $\bar{\mathcal{G}}^{N}$
game.

\subsection{Characterization of the NE for $\bar{\mathcal{G}}^{N}$ and asymptotic information production behavior}

In this subsection, we characterize the NE for the $\bar{\mathcal{G}}^{N}$
game. We study the equilibria for the two aggregation functions $F^{1}_{\mathcal{H}}\left(H(X_{1}),H(X_{2}),...,H(X_{K})\right)=\sum_{i=1}^{K}H(X_{i})$,
and $F^{2}_{\mathcal{H}}\left(H(X_{1}),H(X_{2}),...,H(X_{K})\right)=\max\{H(X_{1}),H(X_{2}),...,H(X_{N})\}$.
In the following Theorem, we obtain some properties of the equilibria
of $\bar{\mathcal{G}}^{N}$ when the aggregation function is $F^{1}$. 
\begin{thm}
For the aggregation function $F^{1}$ we have:

(1) If $c>k\bar{H}$, then there exists a unique equilibrium ${\bf s}^{*}$ where the network is fully disconnected  and every agent produces the individually optimal amount of information ($H^{*}(X_{i})=\bar{H)}$. 

(2) If $c<k\bar{H}$, then ${\bf s}^{*}$ is an equilibrium if and only if: (i) the CIN is minimally connected, (ii) the total amount of information is $H (\mathcal{X})=\bar{H}$, and (iii) if any agent $i$ forms a link in the network ($g_{ij}^{*}=1,i,j\in\mathcal{N}$), then the cost of linking should be less than the cost of producing the amount of information obtained by forming a link $c\leq kH^{*}(X_{-i})$. 

\end{thm}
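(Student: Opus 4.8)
The plan is to reduce everything to single-agent best responses, using that under $F^{1}$ the information enjoyed by agent $i$ is $H(X_{i\cup\mathcal{R}_{i}(\mathbf{g})})=H(X_{i})+\sum_{j\in\mathcal{R}_{i}(\mathbf{g})}H(X_{j})$, so that for a fixed link structure the production level $h=H(X_{i})$ matters only through $h+R_{i}$, where $R_{i}$ is the information $i$ collects from its reachable set. Since $f$ is concave and $f'(\bar{H})=k$, the production best response is $h^{*}=\max\{\bar{H}-R_{i},0\}$, so in every equilibrium each agent sees at least $\bar{H}$. The first fact I would establish as a lemma is: in any equilibrium, for any $c$, every connected component $\mathcal{C}$ has $\sum_{j\in\mathcal{C}}H^{*}(X_{j})=\bar{H}$ --- if some member of $\mathcal{C}$ produces a positive amount its production FOC forces this, and if all members produce $0$ the FOC is violated. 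This is the quantitative workhorse and, once connectivity is in hand, yields condition~(ii).

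For part~(1), $c>k\bar{H}$. I would first verify the candidate profile ``every agent isolated, each producing $\bar{H}$'' is an NE: a production-only deviation is ruled out by the FOC, while a deviation in which $i$ adds $m\ge 1$ links and (optimally) produces nothing yields $f(m\bar{H})-mc$, and concavity gives $f(m\bar{H})-f(\bar{H})\le k(m-1)\bar{H}$, hence $f(m\bar{H})-mc<f(\bar{H})-k\bar{H}$ precisely because $c>k\bar{H}$. For uniqueness I would show that no agent owns a link: if $i$ owns edge $(i,j)$ in a component (necessarily a tree of total production $\bar{H}$), deleting it and re-optimizing production changes $i$'s payoff by $c-kw$, where $w\le \bar{H}$ is the production on $j$'s side; as $c>k\bar{H}\ge kw$ this is a strict improvement. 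Hence the network is fully disconnected and the FOC pins $H^{*}(X_{i})=\bar{H}$.

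For the ``if'' direction of part~(2), assume (i)--(iii) and fix $i$; since the network is a single tree, $i$ reaches everyone and sees $\bar{H}$. A production-only change is suboptimal by the FOC; adding any link is strictly wasteful (no new information at cost $c>0$); dropping a set of owned links severs the corresponding branches, so that after re-optimizing production $i$'s payoff changes by $\sum_{s}\bigl(c-kw_{s}\bigr)\le 0$ by (iii), read as $c\le kw$ for the information $w$ obtained through each link $i$ forms; a redirection is a drop plus a redundant add, hence also unprofitable. For the ``only if'' direction, internal minimal connectivity of each component is exactly the Proposition~1 argument (a link inside a cycle can be deleted, saving $c$ with no informational loss), and (iii) follows since deleting a single owned link $(i,j)$ changes $i$'s payoff by $c-kw_{j}\le 0$ with $w_{j}\le H^{*}(X_{-i})$.

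The crux is showing every equilibrium network is \emph{connected}. Suppose not. A singleton component $\{\ell\}$ has $H^{*}(X_{\ell})=\bar{H}$ and $\ell$ strictly gains by linking to another component (it then produces nothing and still sees $\bar{H}$, a gain of $k\bar{H}-c>0$). So take a component $\mathcal{C}$ with $|\mathcal{C}|\ge 2$ and let $v$ maximize production in $\mathcal{C}$ (so $H^{*}(X_{v})>0$). Agent $v$ owns no link, for if $v$ owned $(v,j)$ then redirecting it to another component lets $v$ reach that whole component, hence see $\ge\bar{H}$; $v$ then stops producing, keeping its link count and its benefit $\ge f(\bar{H})$ while saving $kH^{*}(X_{v})>0$. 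Applying the same redirection to any non-root node after rooting $\mathcal{C}$ at $v$ forces every edge to be owned by its endpoint farther from $v$ and forces the subtree below each such node to carry zero production; inductively all of $\mathcal{C}$'s production concentrates at $v$, so $H^{*}(X_{v})=\bar{H}$. But then $v$'s deviation of adding one link to another component and producing nothing changes its payoff by $f\bigl(2\bar{H}-H^{*}(X_{v})\bigr)-f(\bar{H})+kH^{*}(X_{v})-c=k\bar{H}-c>0$, a contradiction. Hence the network is a single tree, giving~(i), and the lemma gives~(ii). I expect this connectivity step to be the main obstacle: it must interleave the cross-component ``add a link'' deviation (profitable once an agent produces enough) with the ``redirect an owned link'' deviation (profitable the moment anything is produced on the owner's side of that edge), chaining the latter through the tree to drive all production onto one node. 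A minor caveat worth flagging in the write-up: condition~(iii) as literally stated ($c\le kH^{*}(X_{-i})$) is only necessary; the exact equilibrium condition is the per-link form $c\le kw$ used above.
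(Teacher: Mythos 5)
Your proof is correct and follows the same overall architecture as the paper's Appendix J: verify the isolated, all-produce-$\bar{H}$ profile and rule out owned links for part (1), then an if/only-if argument for part (2) in which connectivity is the pivotal step. Where you genuinely diverge is in how connectivity is established, and your route is the more reliable one. The paper disposes of it in one line: any agent with positive production in a multi-component network is ``better off by not producing any information and forming a link to the other component.'' That deviation changes the agent's payoff by $f(2\bar{H}-h)-f(\bar{H})+kh-c$, where $h$ is its own production; this is guaranteed positive only when $kh>c$, so for a small producer in a component whose total $\bar{H}$ is spread over many agents, and an $f$ that is nearly flat beyond $\bar{H}$, the paper's deviation need not be profitable. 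Your two-stage argument closes this gap: first the redirection deviation shows no producing agent can own a link (stopping production saves $kh>0$ while the redirected link still delivers at least $\bar{H}$ of information at unchanged link cost), then rooting the tree at the top producer and invoking strict monotonicity of $f$ forces all of the component's production onto that single node, at which point the cross-component deviation is unambiguously worth $k\bar{H}-c>0$. The one dependency worth stating explicitly in a write-up is that the subtree-concentration step needs $f$ strictly increasing above $\bar{H}$. Your closing caveat is also well taken and applies to the theorem statement itself, not just the proof: the exact equilibrium requirement is the per-link form $c\le kw_{j}$, with $w_{j}$ the production on the branch severed by deleting the link to $j$, and the aggregate form $c\le kH^{*}(X_{-i})$ is strictly weaker --- a star whose center owns both links to two leaves each producing $\bar{H}/2$ satisfies (i)--(iii) as literally written yet is not an NE whenever $k\bar{H}/2<c<k\bar{H}$. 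The paper's proof glosses over both points.
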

\begin{proof} See Appendix J. \, \IEEEQEDhere \end{proof} Condition
(1) results from indirect information sharing among connected agents.
In addition, the network has a total information of $\bar{H}$ since
all agents perfectly share the information they produce, which results
in condition (2). Finally, condition (3) says that the cost of linking
should be less than the cost of producing the amount of information
obtained via linking. In the following Theorem, we characterize the
equilibrium when the aggregation function is $F^{2}$. 
\begin{thm}
For the aggregation function $F^{2}$ we have:

(1) If $c>k\bar{H}$, then there exists a unique equilibrium ${\bf s}^{*}$ where $g_{ij}^{*}=0,$ and $H^{*}(X_{i})=\bar{H},\forall i,j\in\mathcal{N}$. 

(2) If $c<k\bar{H}$, then ${\bf s}^{*}$ is an equilibrium if and only if: (i) the CIN is minimally connected, (ii) there exists exactly one agent $i$ with $H^{*}(X_{i})=\bar{H}$, and $H^{*}(X_{-i})=0$, (iii) all agents with zero information production form exactly one link. 
\end{thm}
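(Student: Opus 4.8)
The plan is to establish both parts by combining the minimal-connectivity result (Proposition 1, which applies to $\bar{\mathcal{G}}^N$ by the same link-redundancy argument) with a careful case analysis of unilateral deviations, exploiting the structure of $F^2(\cdot) = \max\{\cdot\}$: the joint entropy of any set of agents equals the largest single entropy among them, so the total information that an agent $i$ can gather by connecting to a set $\mathcal{R}_i({\bf g})$ is $\max\{H(X_i), \max_{j \in \mathcal{R}_i({\bf g})} H(X_j)\}$. First I would dispatch part (1): when $c > k\bar{H}$, suppose for contradiction some agent $i$ forms a link $g_{ij}^*=1$ in an NE. Since every agent's intrinsic information can be at most $\bar{H}$ (producing more is strictly dominated because $f'(\bar H)=k$ and $f$ is concave, so marginal benefit of own production above $\bar H$ is below $k$), the extra informational benefit $i$ derives from all its links is at most $f(\bar{H}) - f(H^*(X_i)) \le f(\bar H)$, whereas the cost of even a single link is $c > k\bar H$. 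A short argument comparing $f(\bar H)$ against $k\bar H$ — using concavity and $f(0)=0$, so $f(\bar H) \le f'(0)\bar H$ — does not immediately close this; instead I would argue that breaking \emph{all} of $i$'s links and setting $H^*(X_i)=\bar H$ is a profitable deviation whenever $i$ currently sponsors at least one link, because the net informational loss is at most $f(\bar H)-f(0)=f(\bar H)$ while the saving is at least $c + (\text{cost of the other links}) > k\bar H \ge$ the cost of topping up its own production to $\bar H$; comparing $f(\bar H) \le k \bar H$ (which follows from concavity: $f(\bar H) = \int_0^{\bar H} f'(t)\,dt \ge \bar H f'(\bar H) = k\bar H$ — wait, this gives the reverse) forces me to instead use the sharper deviation where $i$ keeps exactly the information it was receiving. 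I would therefore compare the status quo against \emph{disconnecting and self-producing the same total}: the honest route is to show directly that in any NE with $c>k\bar H$ no link is sponsored, hence every component is a singleton, and then each agent solves $\max_{H(X_i)} f(H(X_i)) - kH(X_i)$, yielding the unique $H^*(X_i)=\bar H$.

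For part (2), assuming $c < k\bar H$: the forward direction uses Proposition 1 for (i), and for (ii)–(iii) I would argue that in any connected component, an agent who produces positive information $H^*(X_j)>0$ while also being reachable from an agent with weakly higher entropy gains nothing from its own production (because $F^2$ is a max, its production is redundant), so it would strictly prefer to set $H^*(X_j)=0$ and save $kH^*(X_j)>0$ — unless $j$ is the unique agent in its component attaining the component-wide maximum entropy. Hence each component has exactly one positive producer, call it the ``core'' producer; that producer sets its own entropy to $\bar H$ (it solves the same single-agent optimization, since it effectively bears no linking benefit from others who produce $0$); and every zero-producer must form exactly one link (it must be connected by (i) and minimality, and forming more than one link is wasteful by Proposition 1, while forming zero links leaves it with $H=0$, strictly worse than paying $c<k\bar H$ to obtain $\bar H$ worth of information since $f(\bar H)-f(0) = f(\bar H) > c$ — here I need $f(\bar H) > k\bar H > c$, but as noted concavity gives $f(\bar H) \le k\bar H$, so I must instead observe $f(\bar H) - f(0) \ge \bar H f'(\bar H)$ is false; the correct inequality is $f(\bar H) > c$ directly from the NE incentive, or from $c<k\bar H$ together with the component having a producer willing to be linked to). The converse direction checks that a profile satisfying (i)–(iii) admits no profitable deviation: the core producer cannot gain by producing less (loses value $>k\cdot(\text{decrement})$ is false by concavity — rather, it loses exactly to the first-order condition $f'(\bar H)=k$, so $\bar H$ is optimal), cannot gain by adding links (no new information, by the max structure and the fact all others produce $0$), cannot gain by dropping its zero (would leave the whole component with no information); a peripheral agent cannot gain by dropping its link ($f(\bar H) > c$ must be verified — I would derive it from the premise $c<k\bar H$ by noting that if $f(\bar H)\le c$ then no component with a producer could be sustained, contradicting that such a configuration is claimed to be an equilibrium, so this becomes an ``if and only if'' consistency condition folded into statement (2)), cannot gain by producing a positive amount (redundant under $F^2$, wasteful), and cannot gain by rewiring to a different neighbor in the same component (the information received is unchanged, cost unchanged).

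The main obstacle I anticipate is the delicate interplay between $c$, $k$, and $\bar H$ near the peripheral agents' indifference: concavity of $f$ makes $f(\bar H) \le k\bar H$ (by the subgradient inequality $f(\bar H) \le f(0) + f'(0)\bar H$ is the wrong direction; the right statement is $f(\bar H) = \int_0^{\bar H}f'(t)dt$ with $f'$ decreasing to $k$, so $f(\bar H) \ge k\bar H$ is also not forced) — so the relation between $f(\bar H)$ and $c$ is \emph{not} a consequence of $c < k\bar H$ alone, and the cleanest fix is to treat part (2) as characterizing equilibria \emph{when they exist}, i.e. the stated conditions are necessary and sufficient for a profile to be an NE, with the understanding that for certain $(c,k)$ no connected equilibrium exists and only the disconnected all-producers profile survives (overlapping with part (1)'s boundary). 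I would make this precise by carrying the inequality $f(\bar H) > c$ as a derived necessary condition whenever (iii) holds with a nonempty periphery, and otherwise noting the component is a singleton producer — which is consistent with (ii). The rest is routine deviation-checking organized by the type of deviation (produce more / produce less / add a link / drop a link / rewire), for each of the two agent roles (core producer vs. peripheral zero-producer), closely paralleling the proof of Theorem 6.
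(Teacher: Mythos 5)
Your overall architecture (deviation analysis by agent role, Proposition 1 for minimality, the observation that under $F^{2}$ any production below the component maximum is worthless and hence set to zero, and the counting argument giving one link per zero-producer) is the same as the paper's, and most of it is sound. However, there is one genuine error that causes you to propose weakening the theorem unnecessarily. You repeatedly assert that concavity gives $f(\bar{H})\leq k\bar{H}$, conclude that the peripheral agent's participation constraint $f(\bar{H})>c$ ``is not a consequence of $c<k\bar{H}$ alone,'' and therefore suggest recasting part (2) as a conditional characterization with $f(\bar{H})>c$ carried as an extra hypothesis. The inequality actually goes the other way: since $f$ is concave with $f(0)=0$ and $\bar{H}$ is defined by $f'(\bar{H})=k$, the derivative satisfies $f'(t)\geq f'(\bar{H})=k$ for all $t\in[0,\bar{H}]$, so $f(\bar{H})=\int_{0}^{\bar{H}}f'(t)\,dt\geq k\bar{H}>c$. (You in fact write this computation correctly in your first paragraph and then dismiss it as ``the reverse'' of what you want.) With $f(\bar{H})\geq k\bar{H}>c$ in hand, a zero-producer strictly prefers paying $c$ for one link (payoff $f(\bar{H})-c$) over isolation with any production level $h$ (payoff $f(h)-kh\leq f(\bar{H})-k\bar{H}<f(\bar{H})-c$), and the theorem holds exactly as stated; no weakening is needed. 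Note also that part (1) never requires comparing $f(\bar{H})$ with $k\bar{H}$ at all: the clean deviation is to break all sponsored links (saving at least $c>k\bar{H}$) and raise own production to $\bar{H}$ (extra cost at most $k\bar{H}$), which cannot decrease the $f$-term because under $F^{2}$ the total accessible information never exceeds $\bar{H}$ when no agent produces more than $\bar{H}$.

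Two smaller imprecisions: first, ``forming more than one link is wasteful by Proposition 1'' is not quite right, since two sponsored links need not create a cycle; the correct reason is that under $F^{2}$ with a single producer, only the link on an agent's unique path toward the producer carries information, so a second sponsored link saves $c$ if dropped, and then the count of $N-1$ tree edges against $N-1$ zero-producers (the producer sponsoring none) forces exactly one link each. Second, ``forming zero links leaves it with $H=0$'' is false in general (the agent may be linked to by others); the claim that survives is the counting argument just described. With the sign of the concavity inequality corrected and these two points tightened, your proof matches the paper's.
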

\begin{proof} See Appendix K. \, \IEEEQEDhere \end{proof}

Theorem 7 states that when agents' information is strongly correlated, information production is monopolized by exactly one agent. That is, unlike the case of uncorrelated information, agents do not distribute the production of information among multiple agents who produce complementary information. Thus, we conclude that information redundancy can have significant impact on the information production behavior at equilibrium.

Several questions arise in networks where cognitive agents take joint information production and link formation decisions: what is the fraction of agents producing information at equilibrium in an asymptotically large network? What is the asymptotic total amount of information in the network? In the rest of this subsection, we address these questions and provide a characterization for the asymptotic informational behavior of agents in a CIN. We investigate the asymptotic behavior of two basic quantities: the fraction of agents producing information at equilibrium, and the total amount of information in the network.

Denote the set of agents producing information at equilibrium by $\mathcal{I}({\bf s}^{*})=\left\{ i\,\left|\, i\in\mathcal{N},\,\mbox{and}\,\, H^{*}(X_{i})>0\right.\right\}$. \cite{ref19} show that if agents produce non-redundant information and there is no indirect information sharing, then in equilibrium, information is produced by only a small subset of agents, and the fraction of information producers becomes vanishingly small as the network size grows, i.e. $\lim_{N\rightarrow\infty}\sup_{{\bf s}^{*}\in{\bf S}^{*}}\frac{|\mathcal{I}({\bf s}^{*})|}{N}=0$. \cite{ref19} calls this ``\textit{the law of the few}".
In the next corollary, we characterize the fraction of information producers and the total amount of information in the network in $\bar{\mathcal{G}}^{\infty}$ when the link cost is large.

\begin{crlry} In the $\bar{\mathcal{G}}^{N}$ game, when $c>k\bar{H}$,
we have 
\begin{equation}
\lim_{N\rightarrow\infty}\frac{|\mathcal{I}({\bf s}^{*})|}{N}=1,\label{thermeq1}
\end{equation}
for both $F^{1}_{\mathcal{H}}$ and $F^{2}_{\mathcal{H}}$.
For $F^{2}_{\mathcal{H}}$, the total amount of information in
the network in $\bar{\mathcal{G}}^{\infty}$ is given by 
\begin{equation}
\lim_{N\rightarrow\infty}H(X_{1},X_{2},...,X_{N})=\bar{H},\label{thermeq3}
\end{equation}
while for $F^{1}_{\mathcal{H}}$ we have 
\begin{equation}
\lim_{N\rightarrow\infty}H(X_{1},X_{2},...,X_{N})=\infty.\label{thermeq5}
\end{equation}
\end{crlry}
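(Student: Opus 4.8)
The plan is to derive each of the three limits directly from the equilibrium characterizations in Theorems 6 and 7, specialized to the regime $c > k\bar{H}$. First I would invoke part (1) of Theorem 6 (for $F^1_{\mathcal{H}}$) and part (1) of Theorem 7 (for $F^2_{\mathcal{H}}$): in both cases, when $c > k\bar{H}$ the unique equilibrium ${\bf s}^*$ is the fully disconnected network in which $g^*_{ij}=0$ for all $i,j$ and every agent produces $H^*(X_i)=\bar{H}>0$. Hence $\mathcal{I}({\bf s}^*) = \mathcal{N}$, so $|\mathcal{I}({\bf s}^*)| = N$ and $\frac{|\mathcal{I}({\bf s}^*)|}{N} = 1$ for every finite $N$; the limit (\ref{thermeq1}) follows trivially, and since the equilibrium is unique there is no need to take a supremum over ${\bf S}^*$. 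This contrasts with the ``law of the few'' regime because here the high link cost makes autarkic production strictly dominant for every agent.

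Next I would compute the total information at this equilibrium for each aggregation function. For $F^1_{\mathcal{H}}$, independence gives $H(X_1,\dots,X_N) = F^1_{\mathcal{H}}(H^*(X_1),\dots,H^*(X_N)) = \sum_{i=1}^N H^*(X_i) = N\bar{H}$. Since $\bar{H}>0$ is a fixed constant determined by $f'(\bar{H})=k$ independent of $N$, letting $N\to\infty$ yields $N\bar{H}\to\infty$, which is (\ref{thermeq5}). For $F^2_{\mathcal{H}}$, strong correlation gives $H(X_1,\dots,X_N) = F^2_{\mathcal{H}}(H^*(X_1),\dots,H^*(X_N)) = \max_i H^*(X_i) = \bar{H}$, a constant that does not grow with $N$, so the limit is simply $\bar{H}$, which is (\ref{thermeq3}). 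Both computations are immediate once the equilibrium production profile is pinned down.

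The only genuine content is therefore the justification that the $c>k\bar{H}$ equilibria of Theorems 6(1) and 7(1) have the stated form, which those theorems already supply; the corollary itself is essentially a bookkeeping step that reads off the fraction of producers and evaluates the aggregation function at the equilibrium point. The ``hard part,'' to the extent there is one, is merely noting that $\bar{H}$ is a network-size-independent constant (so that $N\bar{H}$ diverges while $\bar{H}$ stays fixed) and that uniqueness of the equilibrium lets us dispense with the supremum in (\ref{thermeq1}) entirely. I would state the argument in three short paragraphs mirroring the three displayed equations, citing Theorem 6 and Theorem 7 for the equilibrium structure and the defining relation $f'(\bar{H})=k$ for the size-independence of $\bar{H}$.
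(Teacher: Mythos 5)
Your proof is correct and follows essentially the same route as the paper's: both read off the unique fully disconnected, all-producing equilibrium from the $c>k\bar{H}$ cases of Theorems 6 and 7 and then evaluate the aggregation functions at the profile $H^{*}(X_{i})=\bar{H}$. If anything, you are slightly more careful than the paper in explicitly citing Theorem 7(1) for the $F^{2}_{\mathcal{H}}$ case rather than attributing both equilibrium characterizations to Theorem 6.
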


\begin{proof} See Appendix L. \, \IEEEQEDhere \end{proof}

Corollary 2 says that when the link cost is very high, the network
is fully disconnected and every agent produces the information it
needs. Thus, when the network is asymptotically large, every agent
is an information producer no matter what the amount of information
redundancy is. The number of agents producing information is always
$N$. While the number of information producers does not depend on
$F_{\mathcal{H}}$, it is clear that the total amount of information
in the network depends on the amount of redundancy. When the agents'
information are strongly correlated, the total amount of information
is always bounded by $\bar{H}$. On the other hand, when agents have
uncorrelated information, the total amount of information in an asymptotically large network
is unbounded. In the next corollary, we study the case the information production behavior when the link cost is low.

\begin{crlry} In the $\bar{\mathcal{G}}^{N}$ game, when $c<k\bar{H}$,
the fraction of information producers for $F^{2}_{\mathcal{H}}$
is given by 
\begin{equation}
\lim_{N\rightarrow\infty}\sup_{{\bf s}^{*}\in{\bf S}^{*}}\frac{|\mathcal{I}({\bf s}^{*})|}{N}=0,\label{thermeq12}
\end{equation}
while for $F^{1}_{\mathcal{H}}$ we have 
\begin{equation}
\lim_{N\rightarrow\infty}\sup_{{\bf s}^{*}\in{\bf S}^{*}}\frac{|\mathcal{I}({\bf s}^{*})|}{N}=1.\label{thermeq22}
\end{equation}
For both $F^{2}_{\mathcal{H}}$ and $F^{1}_{\mathcal{H}}$,
the total information in the network is 
\begin{equation}
\lim_{N\rightarrow\infty}H(X_{1},X_{2},...,X_{N})=\bar{H}.\label{thermeq32}
\end{equation}
\end{crlry}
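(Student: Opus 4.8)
The plan is to deduce the corollary directly from the equilibrium characterizations in Theorems~6 and~7, since $c<k\bar{H}$ is exactly the regime treated in part~(2) of each, and to handle the two aggregation functions separately. For $F^{2}_{\mathcal{H}}$, Theorem~7(2) says that in \emph{every} equilibrium ${\bf s}^{*}\in{\bf S}^{*}$ exactly one agent $i$ has $H^{*}(X_{i})=\bar{H}$ and every other agent produces nothing; hence $|\mathcal{I}({\bf s}^{*})|=1$ for all equilibria and all $N$, so $\sup_{{\bf s}^{*}\in{\bf S}^{*}}|\mathcal{I}({\bf s}^{*})|/N=1/N\to 0$, which is \eqref{thermeq12}. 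The same characterization gives $H(X_{1},\dots,X_{N})=F^{2}_{\mathcal{H}}(H^{*}(X_{1}),\dots,H^{*}(X_{N}))=\max_{i}H^{*}(X_{i})=\bar{H}$ identically, establishing \eqref{thermeq32} for $F^{2}_{\mathcal{H}}$.

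For $F^{1}_{\mathcal{H}}$, the total-information claim is immediate as well: Theorem~6(2)(ii) forces $H(\mathcal{X})=\bar{H}$ at every equilibrium when $c<k\bar{H}$, so the quantity in \eqref{thermeq32} is identically $\bar{H}$. The substantive claim is \eqref{thermeq22}. Since $|\mathcal{I}({\bf s}^{*})|/N\le 1$ trivially, it suffices to exhibit, for all sufficiently large $N$, an equilibrium in which every agent produces. The plan is to use the periphery-sponsored star (a fixed core agent, with every other agent forming its single link to the core) together with the uniform production profile $H^{*}(X_{i})=\bar{H}/N$ for all $i$, and to verify the ``if'' direction of Theorem~6(2): (i) a star is a tree, hence minimally connected; (ii) additive aggregation gives $\sum_{i}H^{*}(X_{i})=\bar{H}$; and (iii) each link-sponsoring agent $j$ needs $c\le kH^{*}(X_{-j})=k\bar{H}(1-1/N)$, which holds whenever $1-1/N\ge c/(k\bar{H})$, i.e. for all $N\ge(1-c/(k\bar{H}))^{-1}$ (finite, since $c/(k\bar{H})<1$). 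For such $N$ this profile is a genuine NE with $|\mathcal{I}({\bf s}^{*})|=N$, so $\sup_{{\bf s}^{*}\in{\bf S}^{*}}|\mathcal{I}({\bf s}^{*})|/N=1$ for all large $N$, and the limit in \eqref{thermeq22} is $1$.

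The only step that is more than substitution is this construction, and the one thing to get right is that the \emph{fixed} link cost $c$ must not obstruct an arbitrarily dispersed production profile as $N\to\infty$ --- which is precisely what condition~(iii) of Theorem~6 encodes: a periphery agent that cuts its unique link forfeits the information $H^{*}(X_{-j})=\bar{H}-\bar{H}/N$, whose marginal value $kH^{*}(X_{-j})$ tends to $k\bar{H}>c$, so for $N$ large the link still pays for itself. I expect this to be the only place that needs care. If one prefers not to invoke Theorem~6, the same profile can be checked by hand using concavity of $f$ and $f^{'}(\bar{H})=k$: no unilateral change of production improves a periphery agent's utility, cutting its link changes it by $f(\bar{H}/N)-f(\bar{H})+c\le 0$ once $N$ is large, adding a link is pure loss, and the core (which sponsors no links and already reaches everyone) has no profitable move either.
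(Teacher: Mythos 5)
Your proposal is correct and follows essentially the same route as the paper's proof in Appendix~M: read off the single-producer property from Theorem~7 for $F^{2}_{\mathcal{H}}$, exhibit the periphery-sponsored star with uniform production $\bar{H}/N$ and verify it against Theorem~6 for $F^{1}_{\mathcal{H}}$, and obtain the total-information limits from conditions~(ii) of the respective theorems. If anything, you are slightly more careful than the paper in making explicit the threshold $N\ge(1-c/(k\bar{H}))^{-1}$ beyond which the link-retention condition $c\le k\bar{H}(1-1/N)$ holds, where the paper only says ``when $N$ is asymptotically large.''
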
 \begin{proof} 
See Appendix M.
 \, \IEEEQEDhere \end{proof}

This corollary states that the law of the few introduced in \cite{ref19} does not generally apply in the case of indirect information sharing. The applicability of the law of the few depends on the link cost and information redundancy. For instance, in a network with agents producing highly redundant information, the law of the few only applies when the link cost is $c < k\bar{H}$, whereas for $c > k\bar{H}$, all agents will be information producers as shown in Fig. \ref{figexemp2}, where we display the network topology at equilibrium with each agen labeled by the amount of information it produces for $c > k\bar{H}$ and $c < k\bar{H}$.       
Moreover, there exists information aggregation functions in which the network at NE can have all the agents being information producers for any link cost, and production is no longer dominated by a small set of \textit{hub} agents. Thus, even for low link costs, the applicability of the law of the few is still governed by the amount of information redundancy. If the agents' information are strongly correlated, the law of the few applies and information production is dominated by a small fraction of agents in every equilibrium for $c < k\bar{H}$. In contrast, when the agents produce non-redundant information, the law of the few fails even for low link costs, i.e. $c < k\bar{H}$. Fig \ref{figexemp1} depicts the equilibria for an 8-agent network with $c < k\bar{H}$ when the aggregation function is $F^{2}_{\mathcal{H}}$ and $F^{1}_{\mathcal{H}}$. It is observed that the law of the few applies when the aggregation function is $F^{2}_{\mathcal{H}}$, but fails when the aggregation function is $F^{1}_{\mathcal{H}}$.   
 
Note that while we focused on the extreme cases of information redundancy by considering the aggregation functions $F^{1}_{\mathcal{H}}$ and $F^{2}_{\mathcal{H}}$, the analysis can be extended to other generic aggregation functions. Such generic aggregation functions should be derived from a real-world network setting (e.g. geographical deployment of sensor networks), and an interesting problem becomes studying the information production behavior of agents under these aggregation functions. However, it is sufficient to only consider $F^{1}_{\mathcal{H}}$ and $F^{2}_{\mathcal{H}}$ to show that the celebrated law of the few does not generally hold whenever information redundancy is considered.   



\begin{figure}[!t]
\centering
\includegraphics[width=3.5in]{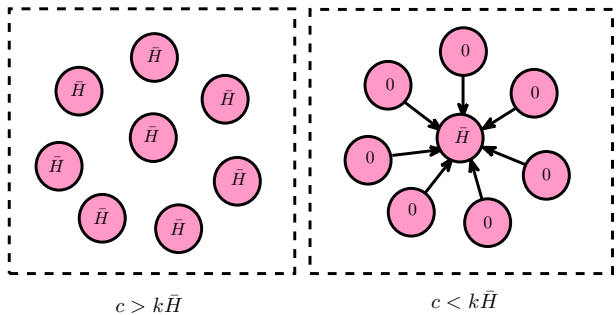}
\captionsetup{font= small}
\caption{Connectivity and information production behavior in a network with strongly correlated information sources.}
\label{figexemp2}
\end{figure}

\begin{figure}[!t]
\centering
\includegraphics[width=3.5in]{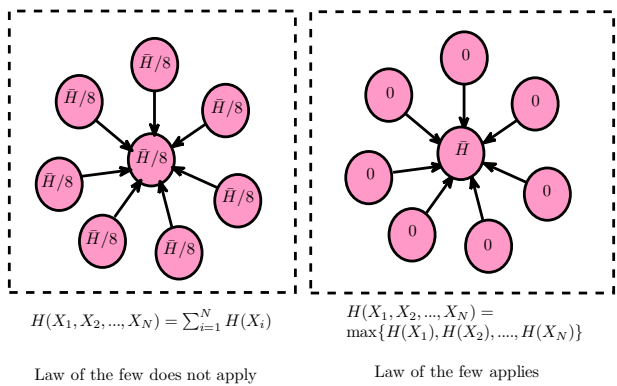}
\captionsetup{font= small}
\caption{Exemplary equilibria for different aggregation functions. The law of the few does not apply when the agents' information has no redundancies.}
\label{figexemp1}
\end{figure}

\section{Future work and extensions}
In this section, we propose some potential future research directions that capitalize on our model. \\ 
{\bf 1- Dynamic games with incomplete information:} we have considered a one-shot complete information game in which all agents have knowledge of the entropic vector. An extension to our model is to consider a dynamic game with incomplete information \cite{ref16}, in which agents {\it learn} the entropic vector over time by interacting with other agents. In this case, agents would pay a link {\it maintenance} cost to keep connected to informative agents, and would break links with non-informative ones. In such model, the network can be characterized in terms of the probability of emergence of certain topology at NE, and the time needed for the network to converge to a steady-state topology. \\    
{\bf 2- Incorporating capacity-constrained links:} we have assumed perfect indirect information sharing among agents. In some settings, such as multi-hop relaying networks, information sharing can be lossy and the links between agents can be capacity-constrained. While lossy benefit flow has been modeled before by assuming that benefits are discounted at each links \cite{ref12}, in our model lossy information sharing can be modeled using an information-theoretic approach by treating links as erroneous channels. Incorporating these factors into our model can lead to interesting results on both the network topology at NE and information production behavior. \\
{\bf 3- New solution concepts:} the network formation game considered in this paper adopts the NE as a solution concept. However, different networks and applications can be better suited by different solution concepts. For instance, in many applications, such as D2D communications, establishing a link requires a mutual consent among agents. In this case, {\it pairwise stability} can be used as a solution concept instead of the NE. 

\section{Conclusions}

In this work, we present a first model for the endogenous formation
of networks by cognitive agents who aim at gathering and producing
information. Using Nash Equilibrium as a solution concept, we
formulated a non-cooperative network formation game where agents get
informational benefits by forming costly links with each other. We
show that the information possessed by the cognitive agents affects
the network topology, efficiency, and information production behavior.
We show the impact of information redundancy on the topologies of
NE networks, and its impact on the network efficiency in terms of
the Price-of-Anarchy and Maximum Information Loss. Finally,
we consider the asymptotic behavior of a network where each agent
both produces information and forms links with other agents. For such
networks, we study the impact of information redundancy on the number
of agents producing information at equilibrium. We show that the validity of the
law of few depends on how information aggregates.

\appendices{}

\section{Proof of Theorem 1}

\global\long\def\theequation{\thesection.\arabic{equation}}
 \setcounter{mytempeqncnt}{\value{equation}} \setcounter{equation}{0}
From {\it Nash's Existence Theorem}, we know that if we allow mixed strategies, then every game with a
finite number of players in which each player can choose from finitely
many pure strategies has at least one Nash equilibrium \cite{ref22X}. Assume that
agent $i$ adopts a mixed strategy $\Delta_{i}=(p_{i1},p_{i2},...,p_{iN})$,
where $p_{ij}$ is the probability that agent $i$ forms a link with
agent $j$, and $p_{ii} = 0, \forall i \in \mathcal{N}$. The utility of agent $i$ in this case is obtained by
averaging over all possible networks as follows 
\begin{equation}
u_{i}(\Delta_{i})=\sum_{j=1}^{2^{N-1}-1}w_{j}f\left(H(X_{i}\cup X_{\alpha_{j}})\right)-\sum_{l=1}^{N-1}p_{il}c,\label{eqap1}
\end{equation}
where $\alpha_{j}$ is an element of the power set of $\mathcal{N}/\{i\}$,
and $w_{j}$ is the probability of the emergence of a network component comprising agents in the set $\left\{i \cup \alpha_{j}\right\}$
based on the mixed strategies. For instance, in a 2 agent network,
the utility function of agent 1 is given by 
\[u_{1}(\Delta_{1})=\]
\[\left(p_{12}(1-p_{21})+p_{21}(1-p_{12})+p_{12}p_{21}\right)f\left(H(X_{1},X_{2})\right)
\]
\[
+(1-p_{12})(1-p_{21})f\left(H(X_{1})\right)-p_{12}c.
\]
In this case, $w_{1}=p_{12}(1-p_{21})+p_{21}(1-p_{12})+p_{12}p_{21}$ and
$w_{2}=(1-p_{12})(1-p_{21})$. Let the NE strategy profile be ${\bf \Delta}^{*}=(\Delta^{*}_{1}, \Delta^{*}_{2},...,\Delta^{*}_{N})$, where $\Delta^{*}_{i}=(p^{*}_{i1},p^{*}_{i2},...,p^{*}_{iN}).$ According to (\ref{eq6}), the following condition on ${\bf \Delta}^{*}$ needs to be satisfied   
\begin{equation}
u_{i}(\Delta^{*}_{i}, \Delta^{*}_{-i}) \geq u_{i}(\Delta_{i}, \Delta^{*}_{-i}), \forall \Delta_{i} \in [0,1]^{N}, \forall i \in \mathcal{N}.
\label{nashcond}
\end{equation}
Now we show that for any agent $i$, the NE strategy $\Delta^{*}_{i}$ needs to be a pure strategy for condition (\ref{nashcond}) to be satisfied. We focus on agent $i$ with a NE strategy $\Delta^{*}_{i}=(p^{*}_{i1},p^{*}_{i2},...,p^{*}_{iN}),$ where $p^{*}_{ij} \in [0,1]$. Now assume we induce a perturbation $\epsilon$ to the mixed strategy of agent $i$ by modifying $p^{*}_{ik}$ to $p^{*}_{ik}+\epsilon$ for a certain $k$, where $\epsilon\in[-p^{*}_{ik},1-p^{*}_{ik}]$. We call this modified strategy $\Delta^{*}_{i}(\epsilon)$. Note that we can write any $w_{j}$ in (\ref{eqap1}) in the form of $w_{j}=\tilde{w}_{j}p^{*}_{ik}+\bar{w}_{j}(1-p^{*}_{ik})$.
This results in a perturbed utility $u_{i}(\Delta^{*}_{i}(\epsilon))$ as follows 
\[
u_{i}(\Delta^{*}_{i}(\epsilon))=\sum_{j=1}^{2^{N-1}-1}\left(\tilde{w}_{j}(p^{*}_{ik}+\epsilon)f\left(H(X_{i}\cup X_{\alpha_{j}})\right)\right.
\]
\begin{equation}
\left.+\bar{w}_{j}(1-\epsilon-p^{*}_{ik})f\left(H(X_{i}\cup X_{\alpha_{j}})\right)\right)-(p^{*}_{ik}+\epsilon)c-\sum_{l=1,l\neq k}^{N-1}p^{*}_{il}c,\label{eqap2}
\end{equation}
which can be rearranged as 
\[
u_{i}(\Delta^{*}_{i}(\epsilon))=
\]
\[
u_{i}(\Delta^{*}_{i})+\epsilon\left(\sum_{j=1}^{2^{N-1}-1}(\tilde{w}_{j}-\bar{w}_{j})f\left(H(X_{i}\cup X_{\alpha_{j}})\right)-c\right).
\]
Let $\delta=\sum_{j=1}^{2^{N-1}-1}(\tilde{w}_{j}-\bar{w}_{j})f\left(H(X_{i}\cup X_{\alpha_{j}})\right)-c$.
It can be easily shown that $\frac{\partial u_{i}(\Delta^{*}_{i}(\epsilon))}{\partial\epsilon}>0$
if $\delta>0$, and $\frac{\partial u_{i}(\Delta_{i}^{*}(\epsilon))}{\partial\epsilon}<0$
otherwise. Thus, if $\delta>0$, agent $i$ can always increase its
utility by increasing $\epsilon$ and setting $\epsilon=1-p^{*}_{ik}$ (and thus playing a pure strategy with $p_{ik}=1$),
and if $\delta<0$, agent $i$ can always increase its utility by setting $\epsilon = -p^{*}_{ik}$(and thus playing a pure strategy with $p_{ik}=0$), which contradicts with $\Delta^{*}_{i}$ being a NE strategy. Thus, for all $k\in\mathcal{N}/\{i\}$, agent
$i$ needs to select a pure strategy $p^{*}_{ik}\in\{0,1\}$ for $\Delta^{*}_{i}$ to be a best response to $\Delta^{*}_{-i}$ regardless of the strategies of other agents, i.e. non-pure strategies are always dominated by a pure strategy. Due to symmetry, this applies to all agents in $\mathcal{N}$. Therefore, it follows that a pure strategy NE always exists.

\section{Proof of Proposition 1}

\global\long\def\theequation{\thesection.\arabic{equation}}
 \setcounter{mytempeqncnt}{\value{equation}} \setcounter{equation}{0}
If the component $\mathcal{C}$ is not minimally connected, then it
has at least one cycle as there exist agents $i$ and $j$ that are
connected via (at least) two paths $p_{ij,1}$ and $p_{ij,2}$, such that any
of the two paths is not a subset of the other. For such component
at NE, assume that agent $v$ is on path $p_{ij,1}$ and agent $w$
is on path $p_{ij,2}$. Note that all the agents receive the same
amount of total information $H(\mathcal{C})$, and we know that there
indeed exists links: $g_{xv}^{*}$ (or $g_{vx}^{*}$) and $g_{wy}^{*}$
(or $g_{yw}^{*}$), where agent $x\in p_{ij,1}$ and agent $y\in p_{ij,2}$.
Now focus on any link of them, say $g_{wy}^{*}=1$. We observe that
agent $w$ can break this link and still receive the same benefit
by gathering the same amount of information from path $p_{ij,1}$,
thus receiving a strictly higher utility function as it will not pay
the cost for the link with agent $y$, which contradicts with the fact
that ${\bf g}^{*}$ is an NE. Thus, a single path exists between any
two agents.

\section{Proof of Lemma 1}

\global\long\def\theequation{\thesection.\arabic{equation}}
 \setcounter{mytempeqncnt}{\value{equation}} \setcounter{equation}{0}

If there exists an agent in which other agents have an incentive to connect to even if they possess all other information in the network, then the network is indeed connected at any equilibrium. This is satisfied if and only if the linking cost satisfies $c \leq f(H(\mathcal{X}))-f(H(X_{-i}))$ for some agent $i$ in $\mathcal{N}$, i.e. the marginal benefit from connecting to that agent is always more than the link cost irrespective to the current connections of the agent forming the link. Thus, we must have $c<\max_{i}f(H(\mathcal{X}))-f(H(X_{-i}))$. Hence, part (i) of the Lemma follows.

If no agent have an incentive to form any link, then the network is fully disconnected. From the monotonicity property of the entropy, we know that if agent $i$ has no incentive to connect to a set $\mathcal{V}$ of agents, then it has no incentive to connect to a set $\mathcal{U}$ if $\mathcal{U}\subseteq\mathcal{V}$. Thus, if agent $i$ has no incentive to connect to the set $\mathcal{N}/\{i\}$ via a single link, then it has no incentive to form any link in the network. This occurs if and only if $c \geq f(H(\mathcal{X}))-f(H(X_{i}))$. If this condition is satisfied for all agents, then the network is indeed disconnected, and part (ii) of the Lemma follows.

\section{Proof of Theorem 2}

\global\long\def\theequation{\thesection.\arabic{equation}}
For the network to be in NE, no agent should have an incentive to unilaterally deviate by forming a new link or breaking an existing link. We focus on an arbitrary network component at NE, say component $\mathcal{C}_{i}$. Inside this component, each agent should either have an incentive to form at least one link, or other agents should have an incentive to connect to it. Otherwise, this agent can get disconnected (by having this agent breaking a link or other agents break their links with that agent) from the component while strictly increasing its utility or the utility of other agents in the component. Thus, we must have either $f(H(X_{\mathcal{C}_{i}}))-f(H(X_{j})) \geq c$ or $f(H(X_{\mathcal{C}_{i}}))-f(H(X_{\mathcal{C}_{i}/\{j\}})) \geq c$ for all agents $j$ in $\mathcal{C}_{i}$. This should apply to all components in the network. Hence, condition (1) follows.

Now focus on the interaction between different components of the network. If any agent in component $\mathcal{C}_{i}$ benefits from forming a link to any agent in component $\mathcal{C}_{j}$, then the network is not NE since in this case an agent in $\mathcal{C}_{i}$ can strictly increase its utility by unilateral deviation. Hence, we should have $f(H(X_{\mathcal{C}_{i}\cup\mathcal{C}_{j}}))-f(H(X_{\mathcal{C}_{i}}))\leq c$ for any two components in the network. Thus, condition (2) follows.

\section{Proof of Theorem 3}

\global\long\def\theequation{\thesection.\arabic{equation}}
When the three conditions in Theorem 3 are satisfied, then the network is a strict NE since the action of each agent is strictly better than any other action, i.e. core agents in each component strictly better off when sponsoring the periphery agents, and all the periphery agents strictly better off when they do not form any links. Thus, no agent is indifferent to multiple actions, which implies that the NE is strict. Now we prove the converse by showing that if the network is a strict NE, the 3 conditions in Theorem 3 must be satisfied. Under strict NE, a non-singleton component $\mathcal{C}$ has two agents $i$ and $j$ such that $g_{ij}^{*}=1$. Now assume that $g_{kj}^{*}=1$ for some agent $k\in\mathcal{C}$. It is clear that $k$ can achieve the same utility by deleting its link with $j$ and connecting to $i$. This contradicts with the fact that $g^{*}$ is a strict NE. Thus, $g_{kj}^{*}=0$. Using a similar argument, it can be shown that $g_{ki}^{*}=0$. Thus, we conclude that $g_{ik}^{*}=1$. This is true for all agents $k\in\mathcal{C}$, which implies that a single core agent $i$ forms links with all other agents in $\mathcal{C}$. Therefore, the core agent $i$ should strictly increase its utility for each of the $M-1$ links it forms. The marginal utility of agent $i$ from forming a link with agent $j$ given that $i$ is connected to all other agents in $\mathcal{C}$ is given by $f(H(X_{\mathcal{C}}))-f(H(X_{\mathcal{C}/\{j\}}))-c$. This should be positive for all agents $j$ in $\zeta$, where $\zeta$ is the set of the $M-1$ periphery agents, because otherwise core agent $i$ can break some of the links in the component. Thus, for agent $i\in\mathcal{C}$ to be a core agent, and for agents $j\in\mathcal{C}$ to reside in the periphery, we must have $f(H(X_{\mathcal{C}}))-f(H(X_{\mathcal{C}/\{j\}}))>c,\,\forall\, j\neq i$. Note that conditions (1) and (2) in Theorem 2 should also be satisfied for the network to be an NE, while the feasibility of organizing each component as a core-sponsored star guarantees that the network is at strict NE. Thus, strict NE exists if there exists an NE with the set $\zeta$ having a cardinality that is not less than $M-1$ for all components, i.e. a single core agent can sponsor each component.

\section{Proof of Lemma 2}
\global\long\def\theequation{\thesection.\arabic{equation}}
We know that in the $\mathcal{K}_{C}$ region, all the NE networks are minimally connected. For a minimally connected, each agent has an aggregate benefit of $f(H(\mathcal{X}))$ and the total number of links is $N-1$ (total cost is $(N-1)c$), thus the social welfare of any minimally connected network with strategy profile ${\bf g}$ is given by
\[U({\bf g}) = Nf(H(\mathcal{X}))-(N-1)c.\] 
In the following we show that this is indeed the maximum social welfare in the $\mathcal{K}_{C}$ region, which means that the socially optimal network in this region is minimally connected. Note that the maximum sum benefits for all agents in the network is $Nf(H(\mathcal{X}))$, i.e. all agents share all information, thus any connected network maximizes the sum benefit. Recall that in the $\mathcal{K}_{C}$ region, we have $c \leq f\left(H(\mathcal{X})\right)-f(\min_{i}H(X_{-i}))$. Thus, for any (disconnected) network with less than $N-1$ links, the social welfare can always be increased by adding a set of links that makes the network (minimally) connected. On the other hand, we know from the pigeonhole principle that any network with more than $N-1$ has cycles, thus the social welfare can always be increased by breaking a set of links such that all cycles are eliminated while keeping the network minimally connected. Therefore, we conclude that the social optimal network in the $\mathcal{K}_{C}$ region is minimally connected, and $\tilde{U} = Nf(H(\mathcal{X}))-(N-1)c$. Since the social welfare of any NE network in $\mathcal{K}_{C}$ is given by $U({\bf g}^{*})= Nf(H(\mathcal{X}))-(N-1)c$, then every NE network is socially optimal and we have $\mbox{PoA} = 1$.
Next, we focus on the $\mathcal{K}_{I}$ region. In this region, any connection will result a negative payoff for any agent who forms a link since $c>f(H(\mathcal{X}))-f(\min_{i}H(X_{i}))$. Thus, the social optimal is a fully disconnected network, which is also the unique (strict) NE, and the PoA = 1 in the $\mathcal{K}_{I}$ region. For the $\mathcal{K}_{M}$ region, we compute an upper bound on the PoA. The lowest social welfare of any equilibrium network in the $\mathcal{K}_{I}$ region is lower bounded by $\sum_{i=1}^{N}f(H(X_{i}))$, i.e. $\inf_{{\bf g}^{*} \in {\bf G}^{*}} U({\bf g}^{*}) \geq \sum_{i=1}^{N}f(H(X_{i}))$, with equality when $f(H(X_{i},X_{j}))-f(H(X_{i}))<c,\forall i,j \in \mathcal{N}$, and $f(H(\mathcal{X}))-f(H(X_{i}))>c,\forall i$ (i.e., agents do not get immediate benefit from forming links to individual agents, thus a fully disconnected network is an NE since not forming a link is a best response for all agents in a fully disconnected network). On the other hand, the social welfare of the socially optimal network in the $\mathcal{K}_{M}$ region is upper bounded by $Nf(H(\mathcal{X}))$, i.e. the social welfare is always strictly less than the sum benefit of all agents when they possess all the information in the network. Thus, it follows that $\mbox{PoA} < \frac{Nf(H(\mathcal{X}))}{\sum_{i=1}^{N}f(H(X_{i}))}$.

\section{Proof of Corollary 1}
\global\long\def\theequation{\thesection.\arabic{equation}}
In the $\mathcal{K}_{C}$ region, we know that all NE networks are connected. Thus, $\sup_{{\bf g}_{u}^{*}\in G^{*}}H(X_{i}\cup X_{\mathcal{R}_{i}({\bf g}_{u}^{*})})=\inf_{{\bf g}_{u}^{*}\in G^{*}}H(X_{i}\cup X_{\mathcal{R}_{i}({\bf g}_{u}^{*})})=H(\mathcal{X})$, and $\mbox{MIL}=0$. Similarly, in the $\mathcal{K}_{I}$ region, we have $\sup_{{\bf g}_{u}^{*}\in G^{*}}H(X_{i}\cup X_{\mathcal{R}_{i}({\bf g}_{u}^{*})})=\inf_{{\bf g}_{u}^{*}\in G^{*}}H(X_{i}\cup X_{\mathcal{R}_{i}({\bf g}_{u}^{*})})=\min_{i}H(X_{i})$, thus $\mbox{MIL}=0$. In the $\mathcal{K}_{M}$ region, the MIL is maximized if both a connected and a fully disconnected network are equilibria. In this case, $\sup_{{\bf g}_{u}^{*}\in G^{*}}H(X_{i}\cup X_{\mathcal{R}_{i}({\bf g}_{u}^{*})})=H(\mathcal{X})$, and $\inf_{{\bf g}_{u}^{*}\in G^{*}}H(X_{i}\cup X_{\mathcal{R}_{i}({\bf g}_{u}^{*})})=\min_{i}H(X_{i})$. Thus, $\mbox{MIL}\leq H(\mathcal{X})-\min_{i}H(X_{i})$, with equality when $c > f(H(X_{i},X_{j}))-f(H(X_{i})), \forall i,j \in \mathcal{N},$ and $c < f(H(\mathcal{X}))-f(H(X_{i})), \forall i \in \mathcal{N}.$ 

\section{Proof of Lemma 3}
\global\long\def\theequation{\thesection.\arabic{equation}}
For a connected network in the $\mathcal{K}_{C}$ region, the utility
of agent $i$ is given by $u_{i}({\bf g}^{*})=f(H(\mathcal{X}))-\sum_{m\in\mathcal{N}_{i}({\bf g}^{*})}c_{m}$.
The social welfare is given by $U({\bf g}^{*})=\sum_{i\in\mathcal{N}}u_{i}({\bf g}^{*})$.
Since we know from Proposition 1 that the network is minimally connected
at equilibrium, then it has exactly $N-1$ links. Therefore, we have
$U({\bf g}^{*})=Nf(H(\mathcal{X}))-\sum_{j\in\mathcal{J}}c_{j}$,
where $\mathcal{J}$ is the set of links in the network designated
by the index of link recipient, and $|\mathcal{J}|=N-1$. The social optimal topology is the connected network with minimum total link costs, which corresponds to a periphery-sponsored star with the agent $k=\arg\min_{j}c_{j}$ residing in the core of the star. The social welfare of such topology is $U({\bf g}^{*})=Nf(H(\mathcal{X}))-(N-1)\min_{j}c_{j}$.
Note that this is also an NE equilibrium as each agent does not benefit
from breaking its link with the core agent and linking to any other
periphery agents. Next, we identify the equilibrium with the worst social welfare. Assume that the link costs are arranged
ascendingly as $c_{1}<c_{2}<c_{3}<...<c_{N-1}<c_{N}$. We know that
the network is minimally connected, thus total costs of link formation
is given by $\sum_{j\in\mathcal{J}}c_{j}$. What are the elements
of the set $\mathcal{J}$ such that the total cost is maximized and
the network is at equilibrium? Note that for the socially optimal
profile, $\mathcal{J}=\{1,1,1,...,1\}$, with a cardinality of $N-1$.
Now assume a line network with $g_{i,i+1}^{*}=1,\forall1\leq i<N$.
Thus, we have $g_{12}^{*}=g_{23}^{*}=...=g_{N-1,N}^{*}=1$. Thus,
$\mathcal{J}=\{2,3,...,N-1,N\}$. It can be easily shown that this
line network is stable, since no agent $i$ can break its link with
agent $i+1$ and increase its utility. For instance, if $i$ breaks
its link with $i+1$, it must connect to any agent $j>i+1$ to receive
the same amount of information but at a higher cost. It can be also
shown that this is the worst equilibrium. This is because for a connected
network, only one agent $i$ connects to the agent $N$ with the highest
link cost, and others can connect to $i$ (which has a lower link cost) and get the information
of $N$ via indirect sharing. The same applies to agent $N-1$, where
one agent connects to it, and others share information by connecting
to that agent. Thus, to maximize the total link cost and maintain
equilibrium, only one link is formed with each agent except the one
with the minimum link cost. Thus, the social welfare in this case
is $NH(\mathcal{X})-\sum_{j=1}^{N}c_{j}+\min_{k}c_{k}$, and the PoA
formula follows.

For the $\mathcal{K}_{I}$ and $\mathcal{K}_{M}$ regions, the proof is the same as that of Lemma 2.

\section{Proof of Theorem 5}

\global\long\def\theequation{\thesection.\arabic{equation}}
In the $\mathcal{K}_{C}$ region, the PoA can be written as $\mbox{PoA}=\frac{Nf(\sum_{i=1}^{N}H(X_{i})-\mbox{KL}(\mathcal{X}))-(N-1)\min_{k}c_{k}}{Nf(\sum_{i=1}^{N}H(X_{i})-\mbox{KL}(\mathcal{X}))-\sum_{j=1}^{N}c_{j}+\min_{k}c_{k}}$. It can be easily shown that if the KL divergence varies from $\mbox{KL}(\mathcal{X}) = \mbox{KL}_{1}$ to $\mbox{KL}(\mathcal{X}) = \mbox{KL}_{2}$, where $\mbox{KL}_{1} < \mbox{KL}_{2}$ and the values of the individual agents' entropies are fixed, then the PoA increases, i.e. we have $\frac{Nf(\sum_{i=1}^{N}H(X_{i})-\mbox{KL}_{1}(\mathcal{X}))-(N-1)\min_{k}c_{k}}{Nf(\sum_{i=1}^{N}H(X_{i})-\mbox{KL}_{1})-\sum_{j=1}^{N}c_{j}+\min_{k}c_{k}} < \frac{Nf(\sum_{i=1}^{N}H(X_{i})-\mbox{KL}_{2})-(N-1)\min_{k}c_{k}}{Nf(\sum_{i=1}^{N}H(X_{i})-\mbox{KL}_{2})-\sum_{j=1}^{N}c_{j}+\min_{k}c_{k}}$.

\section{Proof of Theorem 6}

\global\long\def\theequation{\thesection.\arabic{equation}}
 We start with the case of $c>k\bar{H}$. Assume that there exists
a link in ${\bf g}^{*}$ with $g_{ij}^{*}=1$. In this case, agent
$i$ can always better off by breaking this link and producing an
amount $\bar{H}$ of information. This applies to any agent $i$ in
$\mathcal{N}$. Thus, we have a unique equilibrium with $g_{ij}^{*}=0$,
and $H^{*}(X_{i})=\bar{H}$, $\forall i,j\in\mathcal{N}$. 

Now focus on the case of $c<k\bar{H}$. We show that if ${\bf s}$ satisfies
(i), (ii), and (iii), then ${\bf s}$ is an NE. The minimality of each
network component can be easily proved using Proposition 1. Now we
show that the connected network is an NE. In a disconnected network,
an agent has to produce an amount $\bar{H}$ of information, which
is not optimal since $c<k\bar{H}$. Thus, no agent in a connected
network has incentive to break its link and part (i) follows. Since
the network is minimally connected, then each agent obtains all the
total amount of information $H(\mathcal{X})$. If $H(\mathcal{X})=\bar{H}$,
then no agent in the component has incentive to alter their information
production profile because all agents benefit only from obtaining
an amount $\bar{H}$ of information. Thus, part (ii) is proved. Finally,
if $c\leq kH^{*}(X_{-i}),\forall i$, then no agent in the network
has incentive to break the link it forms and produce an amount $H^{*}(X_{-i})$
of information on its own. Thus, ${\bf s}$ is a Nash equilibrium.

We now prove the converse. Let ${\bf s}$ be an equilibrium. Assume
that the network has two components $\mathcal{C}_{1}$ and $\mathcal{C}_{2}$.
The total amount of information in each component must
be $\bar{H}$ at equilibrium, thus, any agent with positive amount
of information production in one component will better off by not
producing any information and forming a link to the other component.
Thus, the network is connected in NE and part (i) follows. Due to
indirect information sharing, part (ii) is directly concluded. Finally,
if ${\bf s}$ is an equilibrium and $g_{ij}=1$, then this should
be optimal for agent $i$, thus $c\leq kH^{*}(X_{-i}),\forall i\mathcal{N}$. 

\section{Proof of Theorem 7}
\global\long\def\theequation{\thesection.\arabic{equation}}
 The case of $c>k\bar{H}$ is exactly the same as in Theorem 6 and
the proof will be similar to that in Appendix J. Now focus on the
case of $c<k\bar{H}$. We show that if ${\bf s}$ satisfies (i), (ii),
and (iii), then ${\bf s}$ is an NE. Part (i) follows from Proposition
1 and the proof of Theorem 6. Now assume that only one agent in the
network produces $\bar{H}$ information and all others do not produce
any information and only form links in the network. In this case,
the agent producing information does not better off by producing any
amount of information other than $\bar{H}$. In addition, the agents
forming links do not better off by forming new links or breaking their
links and producing information since $c<k\bar{H}$. Thus, part (ii) follows. Since there are
$N-1$ agents forming links, then the network is connected, and no
agent benefits from forming an extra link in the network, which concludes
part (iii).

We now prove the converse. Let ${\bf s}$ be an equilibrium. Due to
indirect information sharing, part (i) follow straightforwardly. Assume
that we have two agents with $H^{*}(X_{i})>H^{*}(X_{j})>0$, then
agent $j$ can always better off by setting $H^{*}(X_{j})=0$ since
the aggregate information of $i$ and $j$ is $H^{*}(X_{j})$. Therefore,
the agent with maximum information production has to set $H^{*}(X_{i})=\bar{H}$,
and all others do not produce information and form a link in the network
since $c<k\bar{H}$. Finally, since ${\bf s}$ is an equilibrium,
agents act optimally (their actions are best responses to the actions of others), thus each agent from the set of $N-1$ non-producers
forms exactly one link in the network.

\section{Proof of Corollary 2}
\global\long\def\theequation{\thesection.\arabic{equation}}
From Theorem 6, we know that when $c>k\bar{H}$, then
we have a unique equilibrium ${\bf s}^{*}$ for both $F^{1}_{\mathcal{H}}$
and $F^{2}_{\mathcal{H}}$ in which $g_{ij}^{*}=0,\,\forall i,j\in\mathcal{N},$
and $H^{*}(X_{i})=\bar{H}$. Thus, we have $H^{*}(X_{i})>0,\forall i\in\mathcal{N},$
and $\frac{|\mathcal{I}({\bf s}^{*})|}{N}=1$, which applies when
the number of agents in the CIN grows to infinity, hence (\ref{thermeq1})
follows. Next, we focus on the total amount of information in the
network. For $F^{2}_{\mathcal{H}}$, we have $H(X_{1},X_{2},...,X_{N})=\max\{\bar{H},\bar{H},...,\bar{H}\}=\bar{H}$,
and (\ref{thermeq3}) follows. Finally, for $F^{1}_{\mathcal{H}}$,
we have $H(X_{1},X_{2},...,X_{N})=\sum_{i=1}^{N}\bar{H}=N\bar{H}$,
and (\ref{thermeq5}) follows.

\section{Proof of Corollary 3}
\global\long\def\theequation{\thesection.\arabic{equation}}
We start by deriving (\ref{thermeq12}).
From Theorem 7, we know that for $F^{1}_{\mathcal{H}}$, every
equilibrium has only one information producer. When the number of
agents grows to infinity, we will still have one information producer
and $\frac{|\mathcal{I}({\bf s}^{*})|}{N}=0$. In order to prove (\ref{thermeq22}),
one needs to find one network in equilibrium for $F^{1}_{\mathcal{H}}$
in which, for arbitrary $N$, we have $N$ information producers.
Consider this network for $N$ agents. Assume that $H(X_{i})=\frac{\bar{H}}{N},\forall i\in\mathcal{N}$,
and the network has a single component which is periphery-sponsored
star network. For this network, we have $|\mathcal{I}({\bf s})|=N$.
We want to show that this network is an NE by showing that every agents
strategy is best response to all others. It is easy to see that since
$c<k\bar{H}$, each periphery agent has no incentive to break its
link with the core since $\frac{N-1}{N}k\bar{H}>c$ when $N$ is asymptotically
large. Moreover, no agent has incentive to alter its information production
profile since the total information in the network is $\sum_{i=1}^{N}\frac{\bar{H}}{N}=\bar{H}$.
Thus, ${\bf s}$ is an NE. Since this applies to any $N$, (\ref{thermeq22})
follows. Finally, since the network is always connected in any equilibrium,
then (\ref{thermeq32}) directly follows.

\end{document}